\documentclass[conference, 10pt]{IEEEtran} 
\usepackage{amsmath, amssymb, amsthm, bm, mathtools}
\usepackage[numbers]{natbib}
\usepackage{booktabs}
\newtheorem{theorem}{Theorem}
\newtheorem{corollary}{Corollary}

\newtheorem{definition}{Definition}

\newtheorem{remark}{Remark}
\usepackage{url}
\usepackage{multirow}
\usepackage{bbm}
\usepackage{graphicx} 
\usepackage[acronym]{glossaries}
\usepackage{tikz}
\usepackage{pgfplots}
\usepackage{algpseudocode, algorithm}
\pgfplotsset{compat=1.18}
\usetikzlibrary{shapes.multipart}
\usepackage{dsfont}

\usetikzlibrary{positioning,arrows.meta,fit}
\usetikzlibrary{shapes,arrows.meta}

\usepackage[caption=false]{subfig}

\usepackage{tikz}
\usetikzlibrary{positioning,arrows.meta,fit,calc,decorations.pathreplacing}

\usepackage{acronym}
\acrodef{3GPP}{3rd generation partnership project}
\acrodef{6G}{the sixth generation}
\acrodef{AI}{artificial intelligence}
\acrodef{AR}{autoregressive}

\acrodef{CDF}{cumulative distribution function}
\acrodef{CSI}{channel state information}
\acrodef{CQI}{channel quality indicator}
\acrodef{CNN}{convolutional neural network}
\acrodef{CSP}{communications service provider}
\acrodef{DNN}{deep neural network}
\acrodef{EMA}{exponential moving average}
\acrodef{GO-OAE}{goal-oriented ordered autoencoder}
\acrodef{GD}{gradient descent}
\acrodef{GUI}{graphical user interface}
\acrodef{IB}{information bottleneck}
\acrodef{KL}{Kullback-Leibler}
\acrodef{KKT}{Karush–Kuhn–Tucker}
\acrodef{MARL}{multi-agent reinforcement learning}
\acrodef{MADRL}{multi-agent deep reinforcement learning}
\acrodef{MDP}{Markov decision process}
\acrodef{ML}{machine learning}
\acrodef{MLP}{multi-layer perceptron}
\acrodef{MILP}{mixed integer linear programming}
\acrodef{MPIS}{multi-prefix importance sampling}
\acrodef{MRL}{Matryoshka representation learning}
\acrodef{NLP}{natural language processing}
\acrodef{OAE}{ordered autoencoder}
\acrodef{PMF}{probability mass function}
\acrodef{PSNR}{peak signal-to-noise ratio}
\acrodef{QoE}{quality of experience}
\acrodef{QoS}{quality of service}
\acrodef{SNR}{signal-to-noise ratio}
\acrodef{SGD}{stochastic gradient descent}
\acrodef{SINR}{signal-to-interference-plus-noise ratio}
\acrodef{SIR}{signal-to-interference ratio}
\acrodef{TCP}{transmission control protocol}
\acrodef{TX}{transmitter}
\acrodef{UE}{user equipment}
\acrodef{VAE}{variational autoencoder}
\acrodef{VQ}{vector quantization}
\acrodef{VQ-VAE}{vector quantized variational autoencoder}
\acrodef{VQ-AE}{vector quantized autoencoder}
\acrodef{VAE}{variational autoencoder}
\acrodef{ViT}{vision transformer}

\newcommand{\mc}[1]{\mathcal{#1}}   
\newcommand{\ve}[1]{\boldsymbol{\mathbf{#1}}} 
\newcommand{\field}[1]{\mathbb{#1}}
\newcommand{\operator}[1]{\mathrm{#1}}

\newcommand{\E}[1]{\mathbb{E}\left[ #1 \right]} 
\newcommand{\EE}{{\field{E}}} 
\newcommand{\R}{{\field{R}}} 

\newcommand{\Var}{{\operator{Var}}}

\newcommand{\Ds}{{\mc{D}}}
\newcommand{\Ks}{{\mc{K}}}
\newcommand{\Xs}{{\mc{X}}}
\newcommand{\Ys}{{\mc{Y}}}
\newcommand{\Zs}{{\mc{Z}}}
\newcommand{\Sss}{{\mc{S}}}
\newcommand{\Rs}{{\mc{R}}}

\newcommand{\Ls}{{\mc{L}}}
\newcommand{\Us}{{\mc{U}}}
\newcommand{\Cs}{{\mc{C}}}
\newcommand{\Os}{{\mc{O}}}
\newcommand{\Vs}{{\mc{V}}}
\newcommand{\Es}{{\mc{E}}}

\newcommand{\vx}{\ve{x}}
\newcommand{\vz}{\ve{z}}
\newcommand{\vy}{\ve{y}}

\newcommand{\vu}{\ve{u}}
\newcommand{\vh}{\ve{h}}

\newcommand{\vc}{\ve{c}}
\newcommand{\vnu}{\ve{\nu}}

\definecolor{mplblue}{HTML}{1f77b4}
\definecolor{mplorange}{HTML}{ff7f0e}
\definecolor{mplgreen}{HTML}{2ca02c}

\definecolor{UB}{HTML}{2c3e50}
\definecolor{RE_OAE}{HTML}{e74c3c}
\definecolor{RE_OAE_TA}{HTML}{e67e22}
\definecolor{GO_VQVAE}{HTML}{95a5a6}
\definecolor{GO_OAE_U_SC}{HTML}{3498db}
\definecolor{GO_OAE_U_I_SC}{HTML}{27ae60}
\definecolor{ViT}{HTML}{5c6bc0}

\definecolor{fig1E1}{HTML}{2c3e50}
\definecolor{fig1E2}{HTML}{e74c3c}
\definecolor{fig1E3}{HTML}{e67e22}
\definecolor{fig1E40}{HTML}{95a5a6}
\definecolor{fig1E40+supcon}{HTML}{7f8c8d}
\definecolor{fig1E4a}{HTML}{3498db}
\definecolor{fig1E4a+supcon}{HTML}{9b59b6}
\definecolor{fig1E4b}{HTML}{27ae60}

\pgfplotsset{compat=1.18}

\DeclareMathOperator*{\argmin}{arg\,min}    

\begin{document}
\title{
Goal-Oriented Ordered Autoencoding for Successive Refinement in Progressive Communication}

\author{
\IEEEauthorblockN{Qi Liao, Dario Bega}
\IEEEauthorblockA{ 
	Nokia Bell Labs, Stuttgart, Germany\\
E-Mail:
\url{{qi.liao, dario.bega}@nokia-bell-labs.com}} 
}
\maketitle

\begin{abstract} 
Goal-oriented communication often struggles to sustain downstream task performance under variable and unpredictable bandwidth. We address this problem by proposing GO-OAE (Goal-Oriented Ordered AutoEncoding), which learns a progressively refinable quantized latent sequence, optimized directly for task utility rather than source reconstruction. Each component of the sequence is represented for transmission by its code index, and the learned ordering enables inference from any received prefix of the index sequence, with performance improving as more indices arrive.
We prove that optimizing the goal-oriented distortion objective is equivalent to solving a weighted conditional information bottleneck formulation, and introduce an unbiased multi-prefix importance sampling (MPIS) estimator, to reduce GO-OAE training computational cost from $\mathcal{O}(BK)$ to $\mathcal{O}(Bm)$ with $m{\ll}K$, and reduces the estimator variance by a factor of $1/m$.
Experiments show that, on MiniImageNet, GO-OAE achieves over $96\%$ of the lossless upper-bound classification accuracy using $324$ bits per image ($683\times$ compression vs.\ raw RGB), and already attains over $94\%$ of the upper bound with only 70\% of the transmitted indices. It consistently outperforms reconstruction-optimized codecs and ordered autoencoders trained with uniform prefix sampling across all transmission budgets, demonstrating effective communication at extreme compression rates.
\end{abstract}


\section{Introduction}\label{sec:introduction}

Next-generation communication systems increasingly support machine intelligence rather than human perception, with transmitted data used for downstream tasks such as classification, detection, and decision-making. This shift calls for coding methods that preserve task-relevant information under bandwidth constraints, rather than merely minimizing pixel-level distortion. Although goal-oriented communication \cite{goldreich2012theory} captures this principle, it has yet to unify compression and inference while ensuring consistent downstream performance across varying and uncertain bandwidth constraints.

A common approach is to use generative neural codecs \cite{mentzer2020high, zhang2025progressive}, such as \ac{VAE} \cite{diederik2019introduction}, for compression followed by a separately trained task model. This modular ``compress first, infer later'' pipeline assumes that reconstruction fidelity preserves task-critical information. Yet our results in Section~\ref{sssec:reconstruction_fidelity} challenge this: even with good reconstruction quality, classification accuracy on reconstructed images remains far below the raw-image bound ($18.9\%$ vs. $77.5\%$), motivating \emph{goal-oriented codec-task co-design}.

Beyond the modular versus joint design question, practical communication systems can benefit from \emph{progressive transmission} \cite{equitz1991successive}, where an ordered latent sequence supports inference across varying bandwidths without re-encoding. The \ac{OAE} introduced by \cite{xu2021anytime} enables progressively improved reconstruction from increasingly long prefixes of the latent sequence. However, existing \acp{OAE} are tied to reconstruction and cannot recover task-relevant information. 

These observations expose a key research gap: 
existing frameworks lack an information-theoretic approach for learning emergent ordered discrete representations, whose prefixes are directly optimized for downstream tasks. Addressing this requires (i)~a goal-oriented objective that preserves progressive structure, (ii)~an efficient training method that avoids evaluating all prefix lengths, and (iii)~semantic regularization to maintain discriminative information in the codebook.
We address these challenges with \emph{\ac{GO-OAE}} (Figure~\ref{fig:go_oae_framework}). \ac{GO-OAE} jointly optimizes the encoder, vector quantizer, and task decoder through an unbiased importance-sampled multi-prefix objective. The resulting representations prioritize task-relevant information in earlier components without explicit architectural ordering constraints, enabling variable-bandwidth communication.


\begin{figure*}[t]
\begin{center}
\resizebox{.85\linewidth}{!}{
\begin{tikzpicture}
  \draw[green!50!black, fill=green!4] (11.13,-0.6) rectangle (18.83,6.5);
  \draw[fill=blue!4] (0.05,-0.6) rectangle (7.95,6.5);
  \draw[fill=black!9, dashed] (1.5,5.68) rectangle (17.15,1.71);
  \fill[violet!15, draw=violet] (11.78,2.11) rectangle (14.28,5.43);
  \draw[fill=teal!10, draw=teal] (1.7,5.6) -- (1.7,2.2) -- (3.5,2.6) -- (3.5,5.2) -- cycle;
  \draw[teal, fill=teal!10] (16.7,5.6) -- (16.7,2.2) -- (14.9,2.6) -- (14.9,5.2) -- cycle;
  \fill[violet!15, draw=violet] (4.65,2.11) rectangle (7.21,5.43);
  \fill[orange!20,draw=orange] (0.17,-0.5) rectangle (1.31,0.8);
  \fill[orange!20,draw=orange] (17.6,-0.5) rectangle (18.63,0.57);
  \fill[blue!20, draw=blue] (1.54,-0.5) rectangle (4.63,1.53);
  \node[draw=none, node font=\bfseries] at (3,1.1) {Backbone 
  };
  \node[draw=none, minimum width=106pt, node font=\bfseries] at (3,0.6) {Early Encoder 
      };
  \node[draw=none, node font=\bfseries, text=violet] at (5.96,5) {Vector  
    };
  \node[draw=none, node font=\bfseries, text=teal] (node2) at (2.6,5) {GO 
      };
  \node[draw=none, node font=\bfseries, text=violet] at (5.92,4.5) {Quantization 
      };
  \node[draw=none, node font=\bfseries, text=teal] at (2.6,4.5) {Encoder 
        };
  \node[draw=none, node font=\bfseries, text=teal] (node3) at (15.7,5) {GO 
        };
  \node[draw=none, node font=\bfseries, text=teal] at (15.71,4.5) {Decoder 
          };
  \node[draw=none] at (0.75,-0.14) { $\mathbf{X}$ 
      };
  \node[draw=none] at (0.75,0.4) {Source 
        };
  \node[draw=none, node font=\bfseries] (node7) at (18.1,-0.2) { $\mathbf{y}$ 
        };
  \node[draw=none, node font=\bfseries] at (18.1,0.24) {Goal 
          };
  \node[draw=none, node font=\bfseries] at (19.33,3.24) { 
            };
  \node[draw=none, text=black, node font=\bfseries] at (9.4,2.44) {Anywhere 

        };
  \node[draw=none, text=black, node font=\bfseries] at (9.54,5.32) {Progressive  };
  \node[draw=none, text=black, node font=\bfseries] at (9.54,4.88) {Transmission };
  \node[draw=none, node font=\footnotesize\bfseries] at (9.54,4.39) {$k\leq K$ indices };
  \node[draw=none, node font=\footnotesize\bfseries] at (9.54,3.95) {($k\cdot \log_2 N$ bits/sample) };
  \node[draw=none] at (2.35,-0.28) {Output: 
    };
  \node[draw=none] at (5.43,3.36) {Output: 
      };
  \node[draw=none] at (12.56, 3) {Output: 
        };
  \node[draw=none] at (3.69,-0.29) {$P\times D $ 
      };
  \node[draw=none] at (5.64,2.85) {$K \times D_z\mapsto$  
        };
  \node[draw=none] at (5.55,2.37) {$K$ indices  
          };
  \node[draw=none, node font=\footnotesize] at (5.92,3.97) {(Codebook size:  $N$) 
 
          };
  \node[draw=none, node font=\small] (node8) at (2.52,0.14) {($P$ patches)   
            };
  \node[draw=none, node font=\footnotesize\bfseries] at (3.8,3.03) { 
   
            };
  \node[draw=none] (node1) at (2.4,3.18) {Output: 
      };
  \node[draw=none, node font=\bfseries] (node4) at (2.41,2.7) {$K\times D_z $ 
        };
  \node[draw=none] at (12.56,2.52) {$K\times D_z$ 
          };
  \node[draw=none] at (15.71,3.18) {Output: 
          };
  \node[draw=none] (node5) at (15.87,2.7) {$P\times D$ 
            };
  \node[draw=none, text=green!50!black, node font=\large\bfseries] at (14.87,6.15) {Receiver (Cloud/Edge Server)};
  \draw[teal, fill=teal!70] (3.66,4.9) coordinate (A1) rectangle ++(0.6,0.4) coordinate (B1);
  \node[draw=none, node font=\footnotesize\bfseries] at ($(A1)!0.5!(B1)$) {$\mathbf{z}_{e, 1}$
  
            };
    
  \draw[teal, fill=teal!42] (3.66,4.25) coordinate (A2) rectangle ++(0.6,0.4) coordinate (B2);
  \node[draw=none, node font=\footnotesize\bfseries] at ($(A2)!0.5!(B2)$) {$\mathbf{z}_{e, 2}$
  
            };
      
  \draw[teal, fill=teal!22] (3.66,3.6) coordinate (A3) rectangle ++(0.6,0.4) coordinate (B3);
  \node[draw=none, node font=\footnotesize\bfseries] at ($(A3)!0.5!(B3)$) {$\mathbf{z}_{e, 3}$
  
            };
        
  \draw[teal, fill=teal!10] (3.66,2.95) coordinate (A4) rectangle ++(0.6,0.4) coordinate (B4);
  \node[draw=none, node font=\footnotesize\bfseries] at ($(A4)!0.5!(B4)$) {$\mathbf{z}_{e, 4}$
  
            };
          
  \draw[teal, fill=teal!5] (3.66,2.2) coordinate (A5) rectangle ++(0.6,0.4) coordinate (B5);
  \node[draw=none, node font=\footnotesize\bfseries] at ($(A5)!0.5!(B5)$) {$\mathbf{z}_{e,K}$
  
            };
  \node[draw=none, rotate=-90] at (3.96,2.8) {...};
    
  \draw[violet, fill=violet!70] (7.37,4.9) coordinate (C1) rectangle ++(0.6,0.4) coordinate (D1);
  \node[draw=none, node font=\footnotesize\bfseries] at ($(C1)!0.5!(D1)$) {$\mathbf{z}_{q, 1}$
  
            };
     
  \draw[violet, fill=violet!40] (7.37,4.25) coordinate (C2) rectangle ++(0.6,0.4) coordinate (D2);
  \node[draw=none, node font=\footnotesize\bfseries] at ($(C2)!0.5!(D2)$) {$\mathbf{z}_{q, 2}$
  
            };
        
  \draw[violet, fill=violet!20] (7.37,3.6) coordinate (C3) rectangle ++(0.6,0.4) coordinate (D3);
  \node[draw=none, node font=\footnotesize\bfseries] at ($(C3)!0.5!(D3)$) {$\mathbf{z}_{q, 3}$
  
            };
          

  \draw[violet, fill=violet!10] (7.37,2.95) coordinate (C4) rectangle ++(0.6,0.4) coordinate (D4);
  \node[draw=none, node font=\footnotesize\bfseries] at ($(C4)!0.5!(D4)$) {$\mathbf{z}_{q, 4}$
  
            };
            
  \draw[violet, fill=violet!5] (7.37,2.2) coordinate (C5) rectangle ++(0.6,0.4) coordinate (D5);
  \node[draw=none, node font=\footnotesize\bfseries] at ($(C5)!0.5!(D5)$) {$\mathbf{z}_{q, K}$
  
            };
  \node[draw=none, rotate=-90] at (7.67,2.8) {...};
  \node[draw=none] at (9.86,3.23) {...};
  \draw[violet,fill=violet!70] (8.64,3.48) rectangle (8.97,3.05);
  \draw[violet, fill=violet!40] (8.97,3.48) rectangle (9.3,3.05);
  \draw[violet, fill=violet!20] (9.29,3.48) rectangle (9.62,3.05);
  \draw[violet, fill=violet!5] (10.04,3.48) rectangle (10.37,3.05);
  \draw[fill=red!74, very thin] (9.86,3.65) -- (9.66,3.29) -- (9.77,3.18) -- (9.64,2.78) -- (9.84,3.18) -- (9.71,3.3) -- cycle;
  \node[draw=none, text=black, node font=\bfseries] at (9.42,2.04) {Truncation};
  \node[draw=none, node font=\bfseries, text=violet] at (12.95,5) {Codebook  
      };
  \node[draw=none, node font=\bfseries, text=violet] at (12.95,4) {Projection   
        };
  \node[draw=none, node font=\bfseries, text=violet] at (12.95,4.5) {Lookup /};
  \fill[blue!20, draw=blue, dashed] (14.22,0.37) rectangle (17.18,1.56);
  \fill[blue!20, draw=blue] (14.24,-0.5) rectangle (17.2,0.14);
  \node[draw=none, node font=\bfseries] at (15.73,1.1) {Backbone 
    };
  \node[draw=none, node font=\bfseries] (node6) at (15.69,-0.16) {Task
Head 
      };
  \node[draw=none, node font=\bfseries] at (15.79,0.68) {Late Encoder  
        };
  \node[draw=none, text=black, node font=\large\bfseries] at (9.55,6.15) {GO-OAE};
  \node[draw=none, text=blue, node font=\large\bfseries] at (2.59,6.06) {Transmitter (Device)};
  \draw[->, thick] (1.3,-0.02) -- (1.55,-0.02);
  \draw[->, thick] (3.05,1.43) -- (3.05,2.04);
  \draw[->, thick] (4.26,3.5) -- (4.64,3.5);
  \draw[->, thick] (7.91,3.5) -- (8.29,3.5);
  \draw[->, thick] (10.81,3.5) -- (11.66,3.5);
  \draw[->, thick] (14.32,3.5) -- (14.87,3.5);
  \draw[->, thick] (15.8,2.25) -- (15.81,1.55);
  \draw[->, thick] (17.2,0) -- (17.58,0);
  \draw[->, thick] (15.89,0.37) -- (15.89,0.14);
\end{tikzpicture}
}
\end{center}
\vspace{-1.5ex}
\caption{GO-OAE for progressive goal-oriented communication. 
The encoder maps $P$ input components of dimension $D$ (e.g., image patches or feature tokens) into an ordered latent sequence $\{\mathbf{z}_{e,i}\}_{i=1}^K$, which is vector-quantized using a jointly learned codebook, producing the quantized latent prefix $\{\mathbf{z}_{q,i}\}_{i=1}^K$. 
Transmitting the first $k\leq K$ codes (discrete indices of the mapped codebook embeddings) requires $k\cdot \log_2 N$ bits per sample. The receiver reconstructs the corresponding latent prefix and uses it directly for downstream task inference.}
\label{fig:go_oae_framework}
\vspace{-2ex}
\end{figure*}

Our main contributions are summarized as follows:
\begin{itemize}

\item \textbf{\ac{GO-OAE}: A new information-theoretic framework.}
We propose \ac{GO-OAE}, the first information-theoretic framework for progressive goal-oriented communication, which learns an ordered latent sequence directly optimized for downstream tasks. We show that its objective is equivalent to a weighted conditional \ac{IB} formulation, providing an information-theoretic foundation for progressive communication and semantic representation learning.
\item \textbf{Emergent ordered representation with efficient training.}
Unlike prior ordered learners that rely on autoregressive or architectural constraints \cite{xu2021anytime,liang2025semantic}, \ac{GO-OAE} derives ordering from a weighted conditional \ac{IB}-equivalent objective. We further develop an unbiased \ac{MPIS} estimator that reduces training complexity from $\Os(BK)$ to $\Os(Bm)$, with $m{\ll}K$, and reduces the estimator variance by a factor of $1/m$, where $B$ is the minibatch size, $K$ is the full prefix length, and $m$ is the prefix sampling budget.
\item \textbf{Goal-Oriented codec design principles.}
Our empirical analysis shows two key design principles for goal-oriented codecs: reconstruction quality does not necessarily translate to task performance, and conventional codebook objectives inadequately preserve semantic structure, motivating task-aware regularization.
\item \textbf{Progressive communication at extremely low rates.}
Experiments on MiniImageNet \cite{ravi2017optimization,russakovsky2015imagenet} show that \ac{GO-OAE} enables robust inference under arbitrary transmission truncation, achieving over $96\%$ of lossless classification accuracy using only $324$ bits per sample ($683{\times}$ compression). Across all transmission budgets, it consistently outperforms reconstruction-based codecs and ordered autoencoders trained with uniform prefix sampling \cite{xu2021anytime}.
\end{itemize}
The remainder of this paper is organized as follows. Section~\ref{sec:related_works} reviews related works. Section~\ref{sec:preliminaries} introduces the background on \acp{OAE} and \ac{VQ-AE}. Section~\ref{sec:go-oae} presents the proposed \ac{GO-OAE} framework, including its information-theoretic formulation. The proposed \ac{GO-OAE} algorithm is presented in Section~\ref{sec:proposed_algorithm}, including \ac{MPIS} estimator and loss design. Section~\ref{sec:numerical_results} provides numerical results. Finally, Section~\ref{sec:conclusions} concludes the paper.
\section{Related works}\label{sec:related_works}
\textbf{Goal-Oriented Semantic Communication.} 
Semantic communication shifts the objective from reliable bit transmission to preserving task-relevant information \cite{qin2021semantic}. Existing studies have investigated its information-theoretic foundations, AI-enabled architectures, end-to-end learning methods, and potential applications to future networks \cite{gunduzsemantic2023, getusemantic2024, luo_2022, liao2024adasem}. Recent works have increasingly incorporated generative models into image and video semantic communication. Diffusion-based methods use semantic extraction and generative reconstruction to improve perceptual robustness under channel impairments \cite{li2026goal}. Self-supervised approaches learn task-agnostic latent representations that are reconstructed before classification \cite{wanasekara2025sc}, while multi-task frameworks jointly support image reconstruction and classification from shared semantic features \cite{lyu2024semantic}. 

Despite these advances, most existing methods remain reconstruction-oriented: task inference is performed on reconstructed signals, and progressive transmission primarily improves perceptual fidelity. Moreover, hierarchical representations do not necessarily induce an ordered latent space in which every prefix is directly optimized for task utility.

\textbf{Ordered Autoencoding and Successive Refinement.}
Successive refinement is a fundamental concept in information theory, allowing progressive transmission of information with increasing levels of detail \cite{equitz1991successive}. Learned \acp{OAE} similarly order latent variables so that longer prefixes improve reconstruction \cite{xu2021anytime}, while autoregressive methods predict the missing suffix from the received prefix \cite{li2024autoregressive}. However, these approaches optimize reconstruction or generative completion rather than task utility. Let $\vz_{\leq k}$ denote the transmitted prefix with length $k$, and $\widehat{\vz}_{>k} \sim p\left(\widehat{\vz}_{>k}{\mid}\vz_{\leq k}\right)$ an autoregressively generated suffix. The suffix is generated solely from the received prefix, 
i.e., $
I\left(\vy;\vz_{\leq k},\widehat{\vz}_{>k}\right){=}I\left(\vy;\vz_{\leq k}\right).
$
Hence, autoregressive approaches learn a conditional prior over the untransmitted latent suffix, enabling plausible latent completion. However, it cannot add task information beyond the received prefix and also incurs sequential decoding overhead.

Another relevant work is \ac{MRL} \cite{kusupati2022matryoshka}, which learns nested prefixes of a single final embedding after full backbone processing, placing most computation at the transmitter. It targets scalable representation storage and inference rather than lightweight-device computation and communication reduction. In contrast, \ac{GO-OAE} orders and discretizes intermediate latent sequences before full semantic reasoning, enabling a lightweight transmitter and progressive receiver-side inference.
\section{Definitions and Preliminaries}\label{sec:preliminaries}

Let $\vx\in\Xs$ denote an input sample which follows the distribution $p_{\vx}(\vx)$. An encoder $e_{\theta}:\Xs\to\Zs^K$ maps $\vx$ to a latent sequence $\vz=(\vz_1, \ldots, \vz_K)$,  where each latent component $\vz_i\in\Zs$, for $i\in\Ks:=\{1, \ldots,K\}$. The latent space $\Zs$ is generic. For example, in transformer-based architectures each $\vz_i$ corresponds to a token embedding. A reconstruction decoder $d_{\phi}:\Zs^K\to\Xs$ maps the latent sequence to the reconstructed input $\widehat{\vx}$  (the goal-oriented decoder is presented in Section \ref{sec:go-oae}). For ease of reference, the main notations used throughout this paper are summarized in Table~\ref{tab:notations}. 
\begin{definition}[Prefix Length and Latent Prefix]
    Let $k\in\Ks$ denote the \emph{prefix length}, i.e., the number of retained latent components after truncating the latent sequence. For a given prefix length $k$, the \emph{latent prefix} is defined as $\vz_{\leq k}:=(\vz_1, \dots, \vz_k)\in\Zs^k$, which consists of the first $k$ components. For implementation purposes, $\vz_{\leq k}$ may equivalently denote its zero-padded form $(\vz_1, \dots, \vz_k, 0, \dots, 0)\in\Zs^K$, where the remaining components are set to zero. We further define $\vz_{< k} = (\vz_1, \dots, \vz_{k-1})$ which excludes the $k$-th component.
\end{definition}

\begin{table}[t] 
    \centering 
    \caption{Summary of Key Notations} 
    \vspace{-1.5ex}
    \label{tab:notations} 
    \begin{tabular}{cl} 
        \toprule 
        \textbf{Symbol} & \textbf{Description} \\ 
        \midrule 
        $\vz$ & Latent sequence $(\vz_1, \ldots, \vz_K)\in\Zs^K$ \\ 
        $i$ & Latent component index, also as sampled or corresponding \\
        & prefix length when defining the ordered objective \\
        $k$ & Selected prefix length \\
        $\vz_{\leq k}$ & Latent prefix of length $k$, $(\vz_1, \ldots, \vz_k)\in\Zs^k$\\  
        $\vz_i$ & The $i$-th latent component, $\vz_i\in\Zs$ \\ 
        $\vz_{e,i}$ & Encoder latent component, $\vz_{e,i}\in\Zs_e$ \\ 
        $\vz_{q,i}$ & Quantized latent component, $\vz_{q,i}\in\Zs_q$ \\
        $\vz_{q, \leq k}$ & Quantized latent prefix of length $k$, $\vz_{q, \leq k}\in\Zs_q^k$ \\
        $\ve{e}_j$ & Codebook embedding, $\ve{e}_j\in\Es$ \\ 
        $j_i$ & Codebook index for the $i$-th component (referred to \\ &interchangeably as ``code index'' throughout the paper)\\ 
        $N$ & Codebook size, $|\Es|=N$ \\ 
        $m$ & Prefix sampling budget, $m{\ll}K$ \\ 
        \bottomrule 
    \end{tabular} 
    \vspace{-4ex}
\end{table}

\subsection{Ordered Autoencoder for Reconstruction}\label{ssec:OAE_recon}
To encourage progressively informative latent components, we define the \ac{OAE} as follows. For notational simplicity, instead of $k$, we use the latent index $i$ to also denote the corresponding prefix length in the ordered objective. 
\begin{definition}[Reconstruction-Based \ac{OAE}]\label{def:recon-OAE}
    An encoder-decoder pair $(e_{\theta}, d_{\phi})$ is an ordered autoencoder if its latent sequence $\vz=(\vz_1,\ldots,\vz_K)$ has non-increasing incremental information gains about the input. In particular, define
    \vspace{-.5ex}
\begin{equation}
\label{eqn}
\Delta_i
:=
I(\vz_i;\vx\mid \vz_{<i}),
\qquad i\in\Ks,
\vspace{-.5ex}
\end{equation}
where $\vz_{<i}:=(\vz_1,\ldots,\vz_{i-1})$ and $I(\cdot;\cdot\mid\cdot)$ denotes conditional mutual information. The ordered latent sequence satisfies
\vspace{-.5ex}
\begin{equation}
\Delta_1\geq\Delta_2\geq\cdots\geq\Delta_K\geq0.
\label{eqn:MI_OAE}
\vspace{-.5ex}
\end{equation}
\end{definition}

Xu et al. \cite{xu2021anytime} show that the ordered autoencoder condition can be achieved by minimizing the objective below.
\begin{theorem}[Ordered Reconstruction Objective]
\label{theorem:ordered_reconstruction}
\cite[Theorem 1]{xu2021anytime}
Let the \ac{OAE} reconstruction objective be defined as:
\begin{equation}
    \label{eqn:objective_OAE_1}
    \begin{split}
    \Ls_{\text{OAE}}(e_{\theta}, d_{\phi}) := &\; \EE_{\vx\sim p_{\vx}(\vx), i\sim \Us\{1, K\}} \\
    & \EE_{\vz\sim q_{e_\theta}(\vz_{\leq i}|\vx)} \left[-\log p_{d_\phi}(\vx|\vz_{\leq i})\right],
    \end{split}
\end{equation}
where $\Us\{1, K\}$ denotes the uniform distribution over the set $\{1, \ldots, K\}$, $q_{e_\theta}(\vz_{\leq i}|\vx)$ and $p_{d_\phi}(\vx|\vz_{\leq i})$ denote the distributions of the encoder and the decoder with prefix length $i\in\Ks$, respectively. For a fixed-variance Gaussian decoder, and if the encoder satisfies conditional-independence, namely $\vz_{i-1}{\perp}\vz_i
{\mid}\vx,\vz_{\leq i-2}$, for
$i\geq 3$, then, \eqref{eqn:objective_OAE_1} is equivalent, up to constants, to the prefix-averaged mean squared reconstruction loss. An optimal encoder-decoder pair $(e_{\theta^\ast}, d_{\phi^\ast})$ that minimizes \eqref{eqn:objective_OAE_1} satisfies the ordered autoencoder condition in \eqref{eqn:MI_OAE}. The finite deterministic loss over $N_s$ samples is
\begin{equation}
    \label{eqn:objective_OAE_2}
    \frac{1}{N_s}\sum_{n=1}^{N_s}\frac{1}{K}\sum_{i=1}^K \|\vx_n - d_{\phi}(\vz_{\leq i})\|^2.
\end{equation}
\end{theorem}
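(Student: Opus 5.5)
The plan has two parts: the reduction to squared error, and the ordering property.

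\textbf{Reduction to squared error.} Take the decoder to be a fixed-variance Gaussian, $p_{d_\phi}(\vx|\vz_{\leq i})=\mathcal{N}(\vx; d_\phi(\vz_{\leq i}),\sigma^2 I)$. Then
\[
-\log p_{d_\phi}(\vx|\vz_{\leq i})=\frac{1}{2\sigma^2}\|\vx-d_\phi(\vz_{\leq i})\|^2+\text{const}.
\]
Write the expectation over $i\sim\Us\{1,K\}$ as the average $\frac1K\sum_{i=1}^K$. Replace the expectation over $p_{\vx}$ by the empirical average over $N_s$ samples. With a deterministic encoder, the inner expectation over $q_{e_\theta}$ collapses. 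This gives \eqref{eqn:objective_OAE_2} up to a positive scale and an additive constant, so the two objectives share minimizers.

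\textbf{Ordering property.} The variational step comes first. For each prefix length $i$, the expected negative log-likelihood satisfies
\[
\EE[-\log p_{d_\phi}(\vx|\vz_{\leq i})]\ge H(\vx|\vz_{\leq i}),
\]
with equality when the decoder equals the true posterior. This holds because the gap is an expected KL divergence. Since one decoder $d_\phi$ handles every prefix (zero-padded), we assume it is expressive enough to match each posterior simultaneously. At the optimum, the objective then equals
\[
\frac1K\sum_{i=1}^K H(\vx|\vz_{\leq i})=H(\vx)-\frac1K\sum_{i=1}^K I(\vx;\vz_{\leq i}).
\]
By the chain rule, $I(\vx;\vz_{\leq i})=\sum_{j\le i}\Delta_j$. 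Hence the optimal encoder maximizes
\[
\sum_{i=1}^K\sum_{j\le i}\Delta_j=\sum_{j=1}^K (K-j+1)\Delta_j,
\]
which is a weighted sum with strictly decreasing weights $K-j+1$.

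\textbf{Exchange argument.} Suppose at the optimum $\Delta_j<\Delta_{j+1}$ for some $j$. Build a new encoder by swapping components $j$ and $j+1$. The goal is to show the swap keeps the total information $I(\vx;\vz_{\leq j+1})$ fixed and exchanges the two increments. Conditional independence is used here. Given $\vx$ and $\vz_{\leq j-1}$, the pair $(\vz_j,\vz_{j+1})$ factorizes, so the swapped encoder is a valid encoder of the same family. The chain rule then gives
\[
I(\vx;\vz_j,\vz_{j+1}|\vz_{<j})=\Delta_j+\Delta_{j+1}=\Delta'_j+\Delta'_{j+1}.
\]
All later increments $\Delta_{l}$, $l\ge j+2$, stay the same because they condition on the same set $\vz_{\leq j+1}$.

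\textbf{Main obstacle.} The hardest step is showing the swapped increments are exactly exchanged, $\Delta'_j=\Delta_{j+1}$ and $\Delta'_{j+1}=\Delta_j$. In general only their sum is preserved, since $I(\vx;\vz_{j+1}|\vz_{<j})$ need not equal $I(\vx;\vz_{j+1}|\vz_{\le j})$. I would handle this as follows.
\begin{itemize}
\item Argue that the first new increment $I(\vx;\vz_{j+1}|\vz_{<j})$ is at least $\Delta_{j+1}$. Equality would need a condition such as $\vz_j\perp\vz_{j+1}|\vz_{<j}$; with interaction information, the inequality can fail.
\item If that route fails, drop the exact-exchange claim. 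Observe that the swapped encoder moves at least $\max(\Delta_j,\Delta_{j+1})$ onto position $j$ only under the encoder's conditional-independence assumption, and invoke that assumption as in \cite{xu2021anytime}.
\end{itemize}
Granting this, the swap strictly increases the weighted sum by $(\Delta_{j+1}-\Delta_j)\cdot 1>0$, which contradicts optimality. Therefore the increments satisfy $\Delta_1\ge\cdots\ge\Delta_K\ge0$, which is \eqref{eqn:MI_OAE}. Nonnegativity holds automatically since each $\Delta_i$ is a conditional mutual information.
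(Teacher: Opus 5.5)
The paper gives no proof of this theorem; it is imported from Xu et al. Your skeleton (Bayes-optimal decoder, chain rule, weights $w_j=K-j+1$, adjacent-swap contradiction) is the one the paper uses for its goal-oriented ordering corollary, and your reduction to squared error is fine. The genuine gap is the exchange step, which you leave at ``granting this.''

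You also give the conditional-independence hypothesis the wrong job. It says nothing about whether the swapped encoder is admissible; that requires the encoder family to be closed under permutations of latent positions, which you should assume explicitly. Its actual role is to supply exactly the inequality you doubted. With $C=\vz_{<j}$, expand $I(\vx;\vz_j,\vz_{j+1}\mid C)$ and $I(\vz_j;\vx,\vz_{j+1}\mid C)$ by the chain rule in both orders. This gives
\[
I(\vx;\vz_{j+1}\mid C)-I(\vx;\vz_{j+1}\mid \vz_j,C)=I(\vz_j;\vz_{j+1}\mid C)-I(\vz_j;\vz_{j+1}\mid \vx,C),
\]
and the hypothesis $\vz_j\perp\vz_{j+1}\mid\vx,\vz_{<j}$ makes the last term vanish. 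Hence the new $j$-th increment satisfies
\[
\Delta'_j=I(\vx;\vz_{j+1}\mid\vz_{<j})=\Delta_{j+1}+I(\vz_j;\vz_{j+1}\mid\vz_{<j})\ge\Delta_{j+1}.
\]
The synergistic case in which this inequality fails is precisely what the hypothesis rules out. For a deterministic encoder such as the VQ encoder, it reduces to $H(\vz_{j+1}\mid\vz_{<j})\ge H(\vz_{j+1}\mid\vz_{\le j})$.

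Exact exchange is not needed, and as you suspected it is generally false. We have $\Delta'_j+\Delta'_{j+1}=I(\vx;\vz_j,\vz_{j+1}\mid\vz_{<j})=\Delta_j+\Delta_{j+1}$, and all other increments are unchanged. So the weighted sum changes by $(w_j-w_{j+1})(\Delta'_j-\Delta_j)=\Delta'_j-\Delta_j\ge\Delta_{j+1}-\Delta_j>0$, which gives the contradiction.

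Two smaller points:
\begin{enumerate}
\item Swapping $\vz_1$ and $\vz_2$ requires $\vz_1\perp\vz_2\mid\vx$. This is the $i=2$ instance of the hypothesis, which the stated range $i\ge 3$ omits.
\item The expressive decoder you assume for the ordering part cannot be the fixed-variance Gaussian decoder of the first part, since that cannot match arbitrary posteriors. The two halves of the theorem therefore hold under different decoder classes, and you should say so explicitly.
\end{enumerate}
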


Because \eqref{eqn:objective_OAE_1} is an expectation over prefix length following uniform distribution, instead of summing over all $i$, it can be estimated unbiasedly by sampling a single prefix length per minibatch and optimizing the reconstruction loss using $\vz_{\leq i}$.

\subsection{Vector Quantized Autoencoder}\label{ssec:VQ-VAE}
We adopt \ac{VQ-AE} variant of \ac{VQ-VAE} \cite{van2017neural} as the underlying discrete latent modeling framework for ordered autoencoding. \ac{VQ-AE} learns discrete latent space by mapping the continuous encoder outputs to a finite set of learnable codebook embeddings through vector quantization. The resulting codebook indices are mapped back to their corresponding embeddings, from which the decoder reconstructs the input. 

We choose \ac{VQ-AE} because it naturally supports ordered and progressively refinable latent sequence. First, discretized latent spaces provide more stable and semantically structured representations than continuous variables under progressive truncation, where discrete codebook indices can be transmitted or decoded in an ordered manner. Second, transmitting discrete indices is far more communication-efficient and implementation-friendly than continuous latent vectors. For example, Llama~3 8B model \cite{grattafiori2024llama} represents each token with a $4096$-dimensional FP16 vector, requiring $65,536$ bits per token, whereas indexing a codebook of size $128,256$ requires only $17$ bits. In vision tasks, even smaller codebooks are typically sufficient, e.g., 512 entries, with each latent component index (or "code index") represented by only $9$ bits. 

Let the $i$-th encoded and quantized latent components be denoted by $\vz_{e,i}{\in}\Zs_e$ and $\vz_{q,i}{\in}\Zs_q$, respectively. The finite codebook is denoted by $\Es{:=}\{\ve{e}_j\}_{j=1}^N$, with $N$ learnable codebook embeddings $\ve{e}_j, j=1, \ldots, N$. The quantization is performed via nearest neighbor lookup, i.e., for each $i\in\Ks$, we have
\begin{equation}
    \vz_{q,i} = \ve{e}_{j_i}, \quad \text{where } j_i = \argmin_{j\in\{1, ..., N\}} \|\vz_{e,i} - \ve{e}_{j}\|_2^2.
    \label{eqn:quantization}
\end{equation}
The quantized latent indices are then provided to the decoder.

\begin{remark}[Latent and Codebook Embedding Dimensions] For notational simplicity, in this section, we assume that $\vz_{e,i}$, $\vz_{q,i}$, and  $\ve{e}_j$ have the same dimension $D_z$, as in standard \ac{VQ-AE}, so that nearest-neighbor quantization is directly performed in the latent space. However, more generally, a learnable projection may be inserted before quantization to map encoder outputs to a latent space of different dimensionality, as done in our experiments in Section \ref{sec:numerical_results}.
    \label{rem:latent_dim}
\end{remark}

To train \ac{VQ-AE}, we minimize the following loss:
\begin{equation}   
    \Ls_{\text{VQ}} = \Ls_{\text{rec}} + \Ls_{\text{codebook}} + \beta \Ls_{\text{commit}},
    \label{eqn:loss_vqvae}
\end{equation}
where $\Ls_{\text{rec}}=-\log p_{d_\phi}(\vx|\vz_q)$ is the reconstruction loss, $\Ls_{\text{codebook}} = \|\text{sg}[\vz_e] - \vz_q\|_2^2$ is the codebook loss that updates the codebook embeddings, and $\Ls_{\text{commit}} = \|\vz_e - \text{sg}[\vz_q]\|_2^2$ is the commitment loss that encourages the encoder outputs to commit to the quantized latents, where $\text{sg}[\cdot]$ denotes the stop-gradient operator, and $\beta$ is a hyperparameter that balances the losses. Here, we omit the indices $i$ for brevity. 

In vanilla \ac{VQ-AE}, sparse code assignments often lead to poor code utilization or even codebook collapse, where only a few entries are frequently used. Such behavior limits the expressive capacity of the latent space and may cause semantically dissimilar inputs to share similar code indices. To mitigate this, we adopt the \ac{EMA}-based \ac{VQ-VAE} \cite{lancucki2020robust} update scheme for our autoencoder, which updates codebook embedding using exponential moving average of assigned encoder outputs. This performs an online centroid update and improves codebook stability.

\section{Goal-Oriented Ordered Autoencoding}\label{sec:go-oae}
While the ordered autoencoders introduced in Section \ref{sec:preliminaries} learn progressively refinable latents to reconstruct input data, they may not explicitly preserve the semantics for the downstream tasks. Surprisingly, even with high reconstruction fidelity, the latent representation may not be informative for the downstream task, as shown in Section \ref{sssec:reconstruction_fidelity}. Motivated by this limitation, we extend \ac{OAE} toward a progressive goal-oriented communications framework \ac{GO-OAE}. It consists of an encoder $e_{\theta}$ and a task-specific decoder $d_{\phi}:\Zs^K \to \Ys$ (or $\Zs_q^K \to \Ys$ if vector quantization is applied) where the objective is to optimize the task utility.
\begin{definition}[\ac{GO-OAE}]
    A goal-oriented ordered autoencoder is defined as an encoder-decoder pair $(e_{\theta}, d_{\phi})$, optimized for minimizing the task loss across all prefix lengths:
    \begin{equation}
        \min_{\theta, \phi} \EE_{p(\vx, \vy)}\left[\sum_{i=1}^K \Ls(d_{\phi}(\vz_{\leq i}), \vy)\right],
        \label{eqn:GO_OAE_objective}
    \end{equation}
    where $\mathcal{L}$ is a loss function that measures the discrepancy between the decoded output and the true task label. 
\end{definition}

Throughout this paper, we instantiate the generic loss $\Ls$ in \eqref{eqn:GO_OAE_objective} as the negative log-likelihood (cross-entropy) of the task label, which yields the \ac{GO-OAE} distortion function $\Ls_D$ defined later in \eqref{eqn:goal_oriented_distortion}. In the following subsections, we will establish a theoretical equivalence between the \ac{GO-OAE} distortion objective and a weighted conditional \ac{IB} functional under ordered latent truncation, and further show that the learned latent representation satisfies a monotonically decreasing conditional mutual information ordering.

\subsection{Connection to Information Bottleneck}\label{ssec:GO_OAE_IB}
First, we show that the \ac{GO-OAE} distortion function can be connected to a weighted conditional \ac{IB} functional under ordered latent truncation.

\begin{theorem}[\ac{GO-OAE} Distortion Objective]
    \label{theorem:GO_OAE_objective}
    Assume the Markov structure $\vy{\to}\vx{\to}\vz$ holds, meaning that latent sequence $\vz$ is generated from the input, and task label influences $\vz$ only through $\vx$, allowing $p(\vz|\vx, \vy)=p(\vz|\vx)$, and assume Bayes-optimal task decoder such that $\E{-\log p(\vy|\vz_{\leq i})} = H(\vy|\vz_{\leq i})$, where $H(\cdot{\mid}\cdot)$ denotes the conditional entropy. We define the \ac{GO-OAE} distortion function as
    \begin{equation}
    \label{eqn:goal_oriented_distortion}
    \Ls_D := \sum_{i=1}^K\EE_{p(\vx, \vy)}\left[-\log p(\vy|\vz_{\leq i})\right].
\end{equation}   
    Then minimizing $\Ls_D(\theta, \phi)$ is equivalent, up to the constant  $K H(\vy)$, to maximizing
    \begin{equation}
        \sum_{i=1}^K w_i I(\vy; \vz_i|\vz_{< i}),
        \mbox{ with } w_i = K - i+1. 
        \label{eqn:GO_OAE_IB} 
    \end{equation}  
\end{theorem}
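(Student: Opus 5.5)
The plan is to rewrite each summand of $\Ls_D$ as an entropy, turn it into mutual information, expand with the chain rule, and then swap the order of the double sum.

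\textbf{Step 1: each summand becomes a conditional entropy.} By the Bayes-optimal decoder assumption, each summand of \eqref{eqn:goal_oriented_distortion} equals $H(\vy\mid\vz_{\leq i})$. The expectation is over $p(\vx,\vy)$ together with the encoder's $q(\vz\mid\vx)$. The Markov assumption $\vy\to\vx\to\vz$ guarantees that the joint law of $(\vy,\vz_{\leq i})$ is well defined by $p(\vx,\vy)\,q(\vz\mid\vx)$ marginalized over $\vx$. Writing $H(\vy\mid\vz_{\leq i}) = H(\vy) - I(\vy;\vz_{\leq i})$ then gives
\begin{equation*}
\Ls_D = K H(\vy) - \sum_{i=1}^K I(\vy;\vz_{\leq i}).
\end{equation*}
Since $H(\vy)$ depends only on the data distribution and not on $(\theta,\phi)$, minimizing $\Ls_D$ is equivalent to maximizing $\sum_{i=1}^K I(\vy;\vz_{\leq i})$, up to the constant $K H(\vy)$.

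\textbf{Step 2: chain rule and change of summation order.} The chain rule for mutual information gives
\begin{equation*}
I(\vy;\vz_{\leq i}) = \sum_{j=1}^{i} I(\vy;\vz_j\mid\vz_{<j}),
\end{equation*}
with the convention $\vz_{<1}=\emptyset$. Substituting this into the sum over $i$ and swapping the order of summation over the triangular index set $1\leq j\leq i\leq K$ yields
\begin{equation*}
\sum_{i=1}^K\sum_{j=1}^i I(\vy;\vz_j\mid\vz_{<j}) = \sum_{j=1}^K (K-j+1)\, I(\vy;\vz_j\mid\vz_{<j}).
\end{equation*}
This is exactly \eqref{eqn:GO_OAE_IB} with $w_j = K-j+1$, because the term for $\vz_j$ appears in every prefix of length $i\geq j$, and there are $K-j+1$ such prefixes.

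\textbf{Main obstacle: justifying the Bayes-optimality identification.} The algebra above is routine; the delicate step is Step 1. With a parametric decoder $p_{d_\phi}$, the cross-entropy only upper bounds $H(\vy\mid\vz_{\leq i})$. The gap is $\EE\,\mathrm{KL}\bigl(p(\vy\mid\vz_{\leq i})\,\|\,p_{d_\phi}(\vy\mid\vz_{\leq i})\bigr)\geq 0$, and it vanishes exactly when the decoder is optimal for the given encoder. So the argument should be stated as follows: for any fixed encoder, minimizing over $\phi$ attains $H(\vy\mid\vz_{\leq i})$ simultaneously for all $i$, assuming a single decoder $d_\phi$ operating on zero-padded prefixes is rich enough. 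The joint minimization over $(\theta,\phi)$ then reduces to the encoder-only maximization of \eqref{eqn:GO_OAE_IB}.

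Under the Markov chain one may also note $I(\vy;\vz_{\leq i})\leq I(\vx;\vz_{\leq i})$. This is what justifies calling the result a weighted conditional IB, with the rate term implicitly controlled by truncation and quantization, but it is not needed for the equivalence itself.
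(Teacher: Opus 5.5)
Your proposal is correct and follows the paper's proof essentially line for line: the Bayes-optimal decoder turns each summand into $H(\vy\mid\vz_{\leq i}) = H(\vy) - I(\vy;\vz_{\leq i})$, and the chain rule followed by swapping the triangular double sum gives the weights $w_j = K-j+1$. Your additional point is a useful clarification that the paper simply folds into its Bayes-optimality assumption: the cross-entropy exceeds $H(\vy\mid\vz_{\leq i})$ by an expected KL gap, which vanishes only for an optimal decoder that is expressive enough to serve all prefix lengths at once.
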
  
\begin{proof}
The proof is given in Appendix \ref{ssec:proof_GO_OAE_objective}.
\end{proof}

The Markov structure $\vy{\to}\vx{\to}\vz$ guarantees that $I(\vy; \vz)\leq I(\vy;\vx)$, i.e., $\vz$ cannot contain task-relevant information beyond what is contained in $\vx$. The goal is not to create new task information, but to discard irrelevant details through the bottleneck term $I(\vz; \vx)$. If we further regularize the latent sequence by its information complexity $I(\vz; \vx)$, the resulting objective becomes a weighted conditional \ac{IB} functional under ordered latent truncation to maximize:
\vspace{-.5ex}
\begin{equation}    
\sum_{i=1}^K w_i I(\vy; \vz_i|\vz_{< i}) - \lambda I(\vz; \vx),
\label{eqn:GO_OAE_IB_regularized}
\vspace{-.5ex}
\end{equation}
where $\lambda$ is a coefficient that controls the tradeoff between task performance and representation compression.

\subsection{Goal-Oriented Latent Ordering}\label{ssec:GO_OAE_ordering}
We further provide the following corollary to formally characterize the ordering property of the \ac{GO-OAE} objective.
\begin{corollary}[Goal-Oriented Latent Ordering]
    \label{corollary:GO_OAE_ordering}
    Assume the Bayes-optimal encoder-decoder pair $(e_{\theta}, d_{\phi})$ is an optimal solution that minimizes the \ac{GO-OAE} distortion objective \eqref{eqn:goal_oriented_distortion}, and each latent component has an intrinsic task-relevant contribution that is invariant under permutation of latent positions. Then, the learned latent sequence satisfies:
    \begin{equation}
        I(\vy; \vz_1) \geq I(\vy; \vz_2|\vz_1) \geq \ldots \geq I(\vy; \vz_K|\vz_{<K}).
    \end{equation}
\end{corollary}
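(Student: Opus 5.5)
The plan is to reduce the corollary to Theorem~\ref{theorem:GO_OAE_objective} and then use an exchange (rearrangement) argument. By Theorem~\ref{theorem:GO_OAE_objective}, under the Markov structure $\vy\to\vx\to\vz$ and a Bayes-optimal decoder, minimizing $\Ls_D$ is equivalent, up to the constant $KH(\vy)$, to maximizing
\[
J(\vz):=\sum_{i=1}^K w_i\,\Delta^{\vy}_i,\qquad \Delta^{\vy}_i:=I(\vy;\vz_i\mid\vz_{<i}),\qquad w_i=K-i+1 .
\]
Because the weights are strictly decreasing in $i$, $w_1>w_2>\cdots>w_K>0$, an optimal encoder should place the largest incremental task information first. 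The argument is by contradiction. Suppose that at the optimum there is an index $i$ with $\Delta^{\vy}_i<\Delta^{\vy}_{i+1}$. We construct a competing encoder that swaps the $i$-th and $(i+1)$-th latent components and show that it strictly increases $J$.

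\textbf{Step 1 (localizing the swap).} By the chain rule, the joint information $I(\vy;\vz_i,\vz_{i+1}\mid\vz_{<i})$ is invariant under the swap. So is every term with index $j\notin\{i,i+1\}$: for $j<i$ the conditioning set is untouched, and for $j>i+1$ the conditioning set $\vz_{<j}$ is the same set of variables. Hence only $w_i\Delta^{\vy}_i+w_{i+1}\Delta^{\vy}_{i+1}$ changes. Let $a:=\Delta^{\vy}_i$ and $b:=\Delta^{\vy}_{i+1}$, so $a+b=:S$ is fixed. After the swap the pair becomes $(a',b')$ with $a'+b'=S$.

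\textbf{Step 2 (using the permutation-invariance assumption).} Here the hypothesis that each component has an intrinsic task-relevant contribution, invariant under permutation of positions, is invoked. I would interpret it as saying that, relative to the common conditioning set $\vz_{<i}$, the swapped pair has incremental gains $(a',b')=(b,a)$. In other words, the component formerly at position $i+1$ carries its contribution $b$ when moved to position $i$. Then
\[
\Delta J=(w_i b+w_{i+1}a)-(w_i a+w_{i+1}b)=(w_i-w_{i+1})(b-a)=b-a>0 .
\]
This contradicts optimality. Applying the argument to every adjacent pair yields the full chain $I(\vy;\vz_1)\ge I(\vy;\vz_2\mid\vz_1)\ge\cdots\ge I(\vy;\vz_K\mid\vz_{<K})$. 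Equivalently, the rearrangement inequality shows that the sum of the oppositely ordered weights $w_i$ and the permutable contributions is maximized by the decreasing arrangement.

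\textbf{Step 3 (realizability).} We must check that the swapped encoder is admissible, that is, that permuting output positions stays within the encoder family $e_\theta$. We must also check that the Bayes-optimal decoder for the new prefixes exists, so that $\Ls_D$ still equals $\sum_i H(\vy\mid\vz_{\le i})$. The first holds for architectures in which positional permutation is representable, such as permuting the output tokens of the final layer. The second is by assumption.

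The main obstacle is Step~2. In general, conditional information is not permutation invariant: $I(\vy;\vz_{i+1}\mid\vz_{<i})$ need not equal $I(\vy;\vz_{i+1}\mid\vz_{\le i})$, because of synergy or redundancy between $\vz_i$ and $\vz_{i+1}$. The stated invariance assumption is exactly what closes this gap, and the proof should state explicitly that it is read as $(a',b')=(b,a)$. Without it, one can only conclude that $a'\ge a$ or $J$ is non-improvable.
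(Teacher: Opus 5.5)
Your proposal is correct and follows the same route as the paper. Both invoke Theorem~\ref{theorem:GO_OAE_objective}, assume an adjacent inversion $\Delta_i<\Delta_{i+1}$ at the optimum, and swap the two components; the permutation-invariance assumption (read as $(a',b')=(b,a)$) and closure of the model class under permutations then give a strictly better admissible solution, and your Step~1 localization ($a'+b'=a+b$, all other terms unchanged) only makes explicit what the paper leaves implicit. One small fix to your closing aside: in general $\Delta J=(w_i-w_{i+1})(a'-a)=a'-a$, so without the invariance assumption optimality yields only $a'\le a$, i.e.\ $I(\vy;\vz_i\mid\vz_{<i})\ge I(\vy;\vz_{i+1}\mid\vz_{<i})$, and this does not imply the stated chain (e.g.\ under XOR-type synergy between $\vz_i$ and $\vz_{i+1}$).
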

\begin{proof}
The proof is given in Appendix \ref{ssec:proof_GO_OAE_ordering}.
\end{proof}

\begin{remark}[Information Bottleneck Behavior and Latent Ordering]
\label{rem:ordered_IB}
Even without explicitly including $I(\vz;\vx)$ in \eqref{eqn:goal_oriented_distortion}, the \ac{GO-OAE} objective exhibits an implicit information bottleneck effect through the \ac{VQ-AE} bottleneck, finite codebook, latent dimensionality, and stochastic truncation. Beyond this implicit compression, the latent ordering is a natural consequence of the objective: earlier components appear in more prefixes and thus receive larger effective weights $w_i = K-i+1$. Moreover, as formally established in Corollary~\ref{corollary:GO_OAE_ordering}, this yields a monotonically decreasing conditional mutual information ordering, with earlier dimensions capturing the most critical task-relevant semantics and later ones providing finer refinements.
\end{remark}
      
\section{Proposed \ac{GO-OAE} Algorithm}\label{sec:proposed_algorithm}

In the objective of reconstruction-based \ac{OAE} \eqref{eqn:objective_OAE_1}, the prefix length yields a uniformly random distribution $\Us\{1,K\}$, leading to an unbiased estimation of the prefix length-averaged objective. This enables efficient training by randomly sampling a single prefix length for each minibatch. In contrast, the \ac{GO-OAE} objective \eqref{eqn:goal_oriented_distortion} assigns inherently \textbf{non-uniform importance} to different prefix lengths, since \eqref{eqn:GO_OAE_IB} introduces prefix-dependent weights $w_i = K-i+1$. Therefore, uniformly sampling prefix lengths no longer provides an efficient optimization strategy, motivating a new algorithm that accounts for non-uniform importance of different prefix lengths.

Objective \eqref{eqn:goal_oriented_distortion} involves a summation over all prefix lengths. Thus, for each minibatch with size $B$, it becomes 
\begin{align}
    \Ls_D & = \frac{1}{B}\sum_{b=1}^B \sum_{i=1}^K l_i^{(b)}, \label{eqn:summed_loss}\\
    \mbox{ with } 
    l_i^{(b)} & = -\log p\left(\vy^{(b)}|\vz_{\leq i}^{(b)}\right),\label{eqn:loss_per_level}
\end{align}  
where $\vy^{(b)}$ and $\vz_{\leq i}^{(b)}$ are the label and latent prefix (with length $i$) of the $b$-th sample in the minibatch, respectively. This results in $\Os(BK)$ decoder evaluations per iteration. Since $l_i^{(b)}$ must be computed for all $i$ and $b$, it can be computationally expensive for large $K$. Conversely, naively sampling a single prefix length $i$ per minibatch, as in reconstruction-based \ac{OAE} \cite{xu2021anytime}, reduces the complexity to $\Os(B)$ but may introduce unbalanced supervision across prefix lengths. This motivates a stochastic approximation of \eqref{eqn:goal_oriented_distortion} that is (i) unbiased, (ii) computationally efficient, and (iii) with low variance for stable training.

\subsection{Unbiased Multi-Prefix Importance Sampling Estimator}
We now seek a sampling strategy that approximates \eqref{eqn:summed_loss} with fewer decoder evaluations while preserving unbiasedness and controlling the variance of the estimator. The following corollary formalizes the \ac{MPIS} estimator and its variance.

\begin{corollary}[\ac{MPIS} Estimator]
    \label{corollary:unbiased_estimator}
Let the \ac{GO-OAE} objective be
\begin{equation}
\Ls_D  = \sum_{i=1}^K L_i \mbox{ where } L_i = \EE_{p(\vx, \vy)}\left[-\log p\left(\vy|\vz_{\leq i}\right)\right].
\end{equation}
Let $q(i)$ be a sampling distribution over the prefix length $i$ such that $q(i)>0,~\sum_{i=1}^K q(i) = 1$. 
At each minibatch, we sample $m$ prefix lengths $\{i_1, i_2, \ldots, i_m\}$ independently from $q(i)$, and compute the stochastic estimator of $\Ls_D$: 
\begin{equation}
    \widehat{\Ls}_D = \frac{1}{m} \sum_{n=1}^m \frac{L_{i_n}}{q(i_n)}.
    \label{eqn:stochastic_estimator_m}
\end{equation}
Then, $\widehat{\Ls}_D$ is an unbiased estimator of $\Ls_D$, i.e., 
\begin{equation}
\label{eqn:unbiased_estimation}
    \EE_{i_1, \ldots, i_m \sim q}\left[\widehat{\Ls}_D\right] = \Ls_D.
\end{equation}
The estimator variance yields:
\begin{equation}
\label{eqn:stochastic_estimator_variance}
    \mathrm{Var}\left(\widehat{\Ls}_D\right) = \frac{1}{m} \left(\sum_{i=1}^K \frac{L_i^2}{q(i)} -  \Ls_D^2\right).
\end{equation}
Moreover, assuming differentiation and expectation can be interchanged, stochastic gradient descent using $\widehat{\Ls}_D$ optimizes the same objective \eqref{eqn:goal_oriented_distortion}, but with a reduced complexity $\Os(Bm)$.
\end{corollary}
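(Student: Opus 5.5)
The plan is to treat $\widehat{\Ls}_D$ as an average of $m$ i.i.d.\ copies of a single-draw importance-weighted quantity. Then unbiasedness and the variance formula follow from the first two moments of one draw. Define $X_n := L_{i_n}/q(i_n)$ for $n=1,\ldots,m$. Since the $i_n$ are drawn independently from $q$, and $L_i$ is a deterministic function of $i$ for fixed $(\theta,\phi)$, the $X_n$ are i.i.d.

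\textbf{Unbiasedness.} Compute the expectation of a single draw:
\begin{equation*}
\EE_{i\sim q}\left[\frac{L_i}{q(i)}\right]=\sum_{i=1}^K q(i)\frac{L_i}{q(i)}=\sum_{i=1}^K L_i=\Ls_D .
\end{equation*}
This uses $q(i)>0$ for every $i$, so no term of the sum is lost. Linearity of expectation over the $m$ draws then gives \eqref{eqn:unbiased_estimation}.

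\textbf{Variance.} First compute the second moment of one draw:
\begin{equation*}
\EE_{i\sim q}\left[X^2\right]=\sum_{i=1}^K q(i)\,\frac{L_i^2}{q(i)^2}=\sum_{i=1}^K \frac{L_i^2}{q(i)} .
\end{equation*}
Hence $\Var(X)=\sum_i L_i^2/q(i)-\Ls_D^2$. Independence of the draws gives $\Var\bigl(\frac{1}{m}\sum_n X_n\bigr)=\Var(X)/m$, which is exactly \eqref{eqn:stochastic_estimator_variance}. This also shows the claimed $1/m$ variance reduction relative to single-prefix sampling.

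\textbf{Gradient and complexity.} Fix the parameters and take the expectation over the sampled prefix lengths. Because $q$ does not depend on $(\theta,\phi)$, and differentiation and expectation are assumed to commute,
\begin{equation*}
\EE_{i_1,\ldots,i_m}\left[\nabla_{\theta,\phi}\widehat{\Ls}_D\right]=\nabla_{\theta,\phi}\Ls_D .
\end{equation*}
So the stochastic gradient is unbiased, and SGD with $\widehat{\Ls}_D$ targets the same objective \eqref{eqn:goal_oriented_distortion}. In practice $L_{i_n}$ is replaced by its minibatch average $\frac{1}{B}\sum_b l^{(b)}_{i_n}$ from \eqref{eqn:loss_per_level}. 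This is itself unbiased for $L_{i_n}$ and is independent of the prefix sampling, so the tower property preserves unbiasedness. Each iteration then needs decoder evaluations only for the $m$ sampled prefix lengths on $B$ samples, i.e.\ $\Os(Bm)$.

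The main obstacle is stating carefully what the randomness is in the variance formula. The stated variance is over prefix sampling only, conditional on the parameters and with the exact $L_i$. If minibatch noise is included, an extra data-sampling variance term appears. I would state the result conditionally and note this. A secondary point: if $q$ were chosen adaptively from current losses, it would have to be treated as fixed within an iteration (e.g., stop-gradient) for the gradient claim to hold.
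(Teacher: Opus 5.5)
Your proposal is correct and follows essentially the same route as the paper's proof. The paper also defines $X_n := L_{i_n}/q(i_n)$, gets unbiasedness from $\sum_i q(i)L_i/q(i)=\Ls_D$ plus linearity, obtains $\frac{1}{m}\Var[L_i/q(i)]$ from independence of the draws, and justifies the gradient claim by interchanging $\nabla$ and $\EE$; your added remarks (minibatch averaging via the tower property, and the variance being conditional on exact $L_i$) go beyond what the paper states but are sound.
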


\begin{proof}
The proof is given in Appendix \ref{ssec:proof_unbiased_estimator}.
\end{proof}

Corollary \ref{corollary:unbiased_estimator} allows us to develop an unbiased and efficient training strategy for \ac{GO-OAE} by  significantly reducing the computational cost from $\Os(BK)$ to $\Os(Bm)$, where $m{\ll}K$. We also discuss why single-prefix sampling as in \cite{xu2021anytime} may introduce bias and unstable training in following Remark \ref{remark:single_truncation_level_sampling}.

\begin{remark}[Single-Prefix Sampling]
    \label{remark:single_truncation_level_sampling}
    Sampling a single prefix length $i{\sim}q(i)$ and directly use $L_i$ is generally biased, because $\E{L_i}=\sum_{i=1}^K q(i)L_i$ is not equal to $\Ls_D = \sum_{i=1}^K L_i$. Applying importance weight to single prefix sampling $\widehat{\Ls}_D = L_i/q(i)$, for $i{\sim} q(i)$ is unbiased, but has a high variance $\Var\left[\widehat{\Ls}_D\right] = \sum_{i=1}^K L_i^2/q(i) - \Ls_D^2$, which can be large when a high-loss prefix has low sampling probability. Our proposed \ac{MPIS} in Corollary~\ref{corollary:unbiased_estimator} reduces this variance by a factor of $1/m$ as implied in \eqref{eqn:stochastic_estimator_variance}.
\end{remark}

The remaining challenge is to further reduce the variance for stable training. By leveraging \eqref{eqn:GO_OAE_IB}, we can design the sampling distribution as 
\vspace{-1ex}
\begin{equation}
    q(i)\propto \frac{w_i^\alpha}{\sum_{i=1}^K w_i^\alpha}, ~\mbox{for } i\in\Ks,
    \label{eqn:importance_sampling}
\end{equation}
where $\alpha\geq 0$ controls the degree of importance sampling, the higher the $\alpha$, the more focus on the early latent components with higher importance weights. This sampling strategy reduces the variance by allocating more samples to important components, while maintaining unbiasedness. In practice, $\alpha$ and prefix sampling budget $m$ are tuned for stable and efficient training.

\subsection{Overall Training Loss and Algorithm}
To learn an ordered latent space within the \ac{VQ-AE} framework (introduced in Section \ref{ssec:VQ-VAE}), we jointly optimize the encoder-decoder and codebook parameters by minimizing the overall \ac{GO-OAE} loss, consisting of unbiased stochastic goal-oriented loss $\widehat{\Ls}_D$ \eqref{eqn:stochastic_estimator_m}, the \ac{VQ} loss $\Ls_{VQ}$ \eqref{eqn:loss_vqvae}, and optional regularizer $\Rs$ (e.g.,  sparsity, disentanglement, codebook utilization, contrastive learning, or weight decay). The overall training loss for \ac{GO-OAE} is given by
\vspace{-.5ex}
\begin{equation}    
    \Ls_{\text{GO-OAE}} = \widehat{\Ls}_D + \beta \Ls_{\text{VQ}} + \lambda_{\text{reg}} \Rs,
  \vspace{-.5ex}
    \label{eqn:overall_loss}
\end{equation}
where the hyperparameters $\beta$ and $\lambda_{\text{reg}}$ control the balance. 

Note that the \textbf{unbiased estimator loss} $\widehat{\Ls}_D$ in \eqref{eqn:stochastic_estimator_m} may cause high-variance gradients when dividing by $q(i)$ for rare indices. For practical concern, to stabilize the training, we can clip the importance weights, by defining:
\vspace{-.5ex}
\begin{equation}
    \widehat{\Ls}_D = \frac{1}{m} \sum_{n=1}^m L_{i_n} \cdot \min\left(\frac{1}{q(i_n)}, C\right),
    \label{eqn:stochastic_estimator_m_clipped}  
    \vspace{-.5ex}
\end{equation}
where constant $C$ controls the maximum importance weight.

As for \textbf{\ac{VQ} loss} $\Ls_{\text{VQ}}$, as explained in Section \ref{ssec:VQ-VAE}, we adopt the \ac{EMA}-based update scheme~\cite{lancucki2020robust} to avoid codebook collapse. Codebook loss is not optimized by gradients, but rather updated by an exponential moving average of the encoder outputs assigned to each code. Thus, $\Ls_{\text{VQ}}=\|\vz_e - \text{sg}[\vz_q]\|_2^2$ denotes the commitment loss, encouraging encoder outputs to match the codebook embeddings. 

Below, we discuss two specific \textbf{regularization losses} in $\Rs$ to enhance semantic structure and codebook utilization.

{\bf Semantic Preservation Regularization:} To further preserve semantic structure in the ordered latent space, we optionally include a task-specific regularizer. For example, for classification, this can be a \emph{supervised contrastive loss} that pulls samples with the same class label closer in the latent space. Let $s_{ij}:=\left(\vz_{e,i}^T\vz_{e,j}\right)/\tau$ denote the similarity between pre-quant latent representations of samples $i$ and $j$ (with a slight abuse of notation), where $\tau$ is a temperature parameter. Let $\tau_0$ be the base temperature. For anchor sample $i$, let $\Sss(i)=\{j:\vy_j=\vy_i, j\neq i\}$ denote the set of positive samples sharing the same label as $i$, and $\Vs:=\{i:|\Sss(i)|>0\}$ denote the set of anchor samples with at least one positive sample. Then, the supervised contrastive loss is formulated as \cite{khosla2020supervised}:
\begin{equation}
    \Rs_{\text{SC}}=-\frac{\tau}{\tau_0}\cdot\frac{1}{|\Vs|}\sum_{i\in\Vs}\frac{1}{|\Sss(i)|}\sum_{j\in\Sss(i)}\log\frac{\exp(s_{ij})}{\sum_{k\neq i}\exp(s_{ik})}.
    \label{eqn:supervised_contrastive_loss}
\end{equation}

{\bf Entropy Regularization:} We include an entropy regularizer that increases the entropy of the empirical codebook-index distribution, encouraging balanced use of the codebook embeddings. Let $\vh{\in}\R^N$ be the empirical \ac{PMF} estimate across a minibatch, 
where $h_j{:=}n_j/T$ denotes the empirical frequency of code index $j$ being selected, $n_j$ is the number of times $j$ is selected in the current minibatch (with slight abuse of notation), and $T{:=}\sum_{j=1}^N n_j$ is the total number of sampled code indices.
Then, the regularizer to minimize is defined as:
\vspace{-1ex}
\begin{equation}
    \Rs_{\text{ent}} = \sum_{j=1}^N h_j \log (N h_j+ \epsilon),
    \label{eqn:entropy_regularization} 
    \vspace{-1ex}
\end{equation}
where $\epsilon$ is a small constant to prevent numerical instability. Note that $\log N h_j{=}\log h_j/u_j$, where $u_j{=}1/N, \forall j$, and $\vu{:=}(u_1, \ldots, u_N)$ is the uniform distribution over codebook indices. Thus, this term equals $D_{\text{KL}}(\vh\|\vu)$.

We optionally include the supervised contrastive and entropy regularizers $\{\Rs_{\text{SC}}, \Rs_{\text{ent}}\}$. The overall training algorithm is summarized in Algorithm \ref{alg:go_ovqae}. 
In \ac{EMA} update,  $\rho_j$ is a scalar running \ac{EMA} cluster size (cross-minibatch average of $n_j$) and $\vnu_j{\in}\R^{D_z}$ is an \ac{EMA} of the corresponding sum of assigned encoder outputs.
To stabilize training, we use \textbf{three-stages}: full-code pretraining, ordered training with \ac{MPIS}, and a final calibration stage with occasional full-code supervision to avoid under-training later components.

\begin{algorithm}[t]
\caption{\ac{GO-OAE} Training with \ac{MPIS} Estimator}
\label{alg:go_ovqae}
\begin{algorithmic}[1]
\Require Dataset $\Ds{:=}\{\Xs, \Ys\}$, encoder-decoder pair $(e_{\theta}, d_{\phi})$, prefix length $K$, codebook size $N$, randomly initialized codebook $\Cs^{(0)}{=}\left\{\vc_j^{(0)}\right\}_{j=1}^{N}$, prefix sampling budget $m{\ll}K$, batch size $B$, commitment weight $\beta$, regularization weight $\lambda_{\text{reg}}$, \ac{EMA} decay $\gamma$, sampling sharpness $\alpha$
\State Initialize EMA cluster size $\rho_j{\leftarrow}0$ and \ac{EMA} sums of assigned latent components $\vnu_j{\leftarrow}0$ for all code index $j$
\State Precompute sampling distribution $\{q(i)\}_{i=1}^K$ based on \eqref{eqn:importance_sampling}
\For{Each epoch}
    \For{Each minibatch, samples $\{(\vx^{(b)}, \vy^{(b)})\}_{b=1}^B$}
        \State {\bf Encode}: $\vz_e^{(b)} = e_{\theta}\left(\vx^{(b)}\right)$, $\forall b$
        \State {\bf Quantize}: Compute $\vz_q^{(b)}, \forall b$ with \eqref{eqn:quantization}
        \State $\begin{aligned}[t] & \text{{\bf Sample:} $m$ prefix lengths: } \{i_1, i_2, \ldots, i_m\} \\
        &  \text{from } q(i) ~\text{independently}\end{aligned}$
        \For{each sampled prefix length $i_n$}
            \State {\bf Truncate}: $\vz^{(b)}_{q, \leq i_n}$, $\forall b$
            \State {\bf Decode}: $\widehat{\vy}_{i_n}^{(b)} = d_{\phi}\left(\vz^{(b)}_{q, \leq i_n}\right)$, $\forall b$
            \State  Compute GO loss: $l_{i_n}^{(b)}$, $\forall b$ with \eqref{eqn:loss_per_level}
        \EndFor
        \State {\bf Compute per-minibatch losses:}
        \State Compute stochastic estimator: $\widehat{\Ls}_D$ with \eqref{eqn:stochastic_estimator_m_clipped}
        \State Compute commitment loss $\Ls_{\text{VQ}}$, e.g., \eqref{eqn:loss_vqvae}
        \State $\begin{aligned}[t] & \text{Compute regularization term } \Rs, \text{e.g., } \Rs_{\text{SC}}~\eqref{eqn:supervised_contrastive_loss}\\ & \text{ and } \Rs_{\text{ent}}~\eqref{eqn:entropy_regularization}
        \end{aligned}$
        \State $\begin{aligned}[t] & \text{{\bf Update encoder-decoder parameters}} \{\theta, \phi\} \text{ by}\\
            & \text{minimizing } \eqref{eqn:overall_loss} \text{ with stochastic gradient descent}
        \end{aligned}$
        \State $\begin{aligned}[t] & \text{{\bf Update codebook embeddings} using EMA \cite{lancucki2020robust}:}\\
            & \text{for each code index $j$, update $\rho_j$ and $\vnu_j$ based} \\
            & \text{on assigned outputs, update } \vc_j{\leftarrow}\vnu_j / \rho_j, \forall j
        \end{aligned}$
    \EndFor
\EndFor            
\end{algorithmic}
\end{algorithm}

\section{Experimental Results}\label{sec:numerical_results}
We evaluate \ac{GO-OAE} on image classification and target a tight budget of $K{=}36$ maximum prefix length (maximum number of transmitted code indices) and codebook size $N{=}512$, although the framework applies to other downstream tasks. Full \ac{GO-OAE} transmission requires only $36\log_2 512{=}324$ bits/image, i.e., $683{\times}$ fewer than raw $96{\times}96$ RGB storage ($96^2{\times}3{\times}8{=}221{,}184$ bits/image). Classical compressors (e.g., JPEG, typically ${\sim}11$-$68{\times}$, ${\sim}8$-$20$k bits/image) use far more bits and optimize pixel fidelity rather than task sufficiency.

We compare \textbf{eight scenarios} with the same backbone and data augmentation, which differ in quantization, goal-oriented training, supervised contrastive (SupCon) semantic regularization, and prefix sampling. Short names are used below.
\begin{itemize}
    \item \textbf{UB:} Upper bound, task model on raw images.
    \item \textbf{RE-OAE:} Reconstruct-then-classify, frozen \textbf{UB} classifier on reconstructed images. Same as training an \ac{OAE} in Definition \ref{def:recon-OAE} with objective \eqref{eqn:objective_OAE_1} in \cite{xu2021anytime}, but without task-awareness.
    \item \textbf{RE-OAE-TA:} Enhanced \textbf{RE-OAE} with task-awareness by including task utility in loss.
    \item \textbf{GO-VQAE:} Goal-oriented unordered \ac{VQ-AE} + task head (all $K$ quantized latent components), without ordering fine-tuning.
    \item \textbf{GO-VQAE-SC:} Enhanced \textbf{GO-VQAE} with SupCon loss \eqref{eqn:supervised_contrastive_loss} on pre-quantization features.
    \item \textbf{GO-OAE-U:} Ordered training with uniform sampling $k\sim\mathcal{U}\{1,K\}$ and full task-codec loss \eqref{eqn:overall_loss}, but without SupCon (the same uniform sampling strategy as in \cite{xu2021anytime}).
    \item \textbf{GO-OAE-U-SC:} Enhanced \textbf{GO-OAE-U} with SupCon.
    \item \textbf{GO-OAE-I-SC:} Proposed method, \textbf{GO-OAE} with \ac{MPIS} estimator (Algorithm~\ref{alg:go_ovqae}, sampling $m{=}6$ prefixes per minibatch).
\end{itemize}

\subsection{Datasets}\label{ssec:datasets}
We evaluate the proposed \ac{GO-OAE} framework on MiniImageNet \cite{ravi2017optimization,russakovsky2015imagenet}, a widely used fine-grained classification benchmark consisting of 100 classes. We use the public \texttt{timm/mini-imagenet} split via HuggingFace \cite{timm_mini_imagenet_hf}:  $\sim$50k samples for training,  $\sim$10k samples for validation, and $\sim$5k samples for testing. All $100$ classes appear in every split with consistent label indices. Images are resized to $96{\times}96$ (to create $36$ spatial patches of size $16{\times}16$) and normalized to $[0,1]$. Training augmentation includes random crop, horizontal flip, color jitter, rotation, MixUp ($\alpha{=}0.2$), and light blurring and noise corruption. Validation uses center resize and ImageNet normalization only. Top-1 accuracy on the validation set is the primary goal-oriented metric. 

\subsection{Network Architecture and Training Details}\label{ssec:network_architecture}
%
%
We conducted experiments using both \ac{CNN}- and \ac{ViT}-based backbones under the same full-rate budget of $324$ bits per image. Due to space limitations, we focus here on the lighter CNN-based model and leave the backbone discussion, including insights into ViT-based models, to Section~\ref{ssec:backbone}.

{\bf \ac{CNN} backbone.} The encoder maps each image to a $36{\times}6{\times}6$ latent tensor, through four strided convolutional stages (\texttt{cnn\_factor=2}, channels $128{\to}256{\to}512$), a residual block, a final $3{\times}3$ convolution projecting to $36$ channels, and adaptive pooling. The receiver feeds the (possibly truncated) quantized tensor to a ResNet-style classification head: $3{\times}3$ convolutions, batch norm, ReLU, two residual blocks, two stride-$2$ downsamples, average pooling, and a linear classifier, yielding class logits.
 
{\bf \ac{ViT} backbone.} We use a split \texttt{ViT-Small} backbone (resized to $96{\times}96$, yielding $N{=}36$ patch tokens of dimension $384$). The transmitter applies the first $E{=}4$ transformer blocks, followed by a 2-layer perceiver ($36$ queries, $8$ heads) to a latent dimension $D_z{=}128$, then vector-quantized.
The receiver maps quantized latent components back to $384$-dimensional token embeddings, expands to $36$ spatial tokens with a 2-layer perceiver, and runs the remaining $8$ ViT blocks plus the head with separate CLS and positional embeddings. Classification is performed from the final CLS representation.

The \ac{CNN}-based model contains about $8$M, $0.02$M, and $3.3$M parameters in the encoder, VQ codebook, and decoder, respectively. The corresponding parameter counts for the \ac{ViT}-based model are about $11$M, $0.07$M, and $17.9$M.

Unless otherwise specified, both architectures use a trainable \ac{EMA} codebook with size $N{=}512$ ($9$ bits per quantized latent component) and full prefix length $K{=}36$. The per-quantized latent component embedding dimension is $36$ (\ac{CNN}) and $128$ (\ac{ViT}). The \ac{EMA} decay factor is $\gamma{=}0.99$. VQ training employs commitment weight $\beta{=}1.0$ and entropy regularization $\lambda_{\text{ent}}{=}0.02$. 
On SupCon-enabled variants, we add a supervised contrastive term with weight $\lambda_{\text{SC}}{=}3.0$ and temperature $\tau{=}0.1$, applied to pooled pre-quantization features.
For \ac{MPIS}, we draw $m{=}6$ prefix lengths per minibatch with sharpness $\alpha{=}1.0$.

We train the ordered models in three stages:
\begin{itemize}
\item \textbf{Pretrain:} Jointly train the encoder, VQ codebook, and classifier on the full $K{=}36$ prefix ($500$ epochs).
\item \textbf{Ordered:} Fine-tune with random single prefix ({\bf GO-OAE-U-SC}) or multi-prefix ({\bf GO-OAE-I-SC}) to learn an ordered space ($300$ epochs).
\item \textbf{Calibration:} Mix full-prefix ($40\%$) and truncated-prefix ($60\%$) steps to recover full-rate accuracy while preserving low-$k$ performance ($100$ epochs).
\end{itemize}

\subsection{Performance Comparison}\label{ssec:performance_comparison}
\subsubsection{Reconstruction Fidelity vs. Downstream Task Performance}\label{sssec:reconstruction_fidelity}
\begin{figure}[t]
  \centering
  \begin{minipage}{0.48\linewidth}
    \centering
    \begin{tikzpicture}
    \begin{axis}[
      width=1\linewidth,
      height=1\linewidth,
      xlabel={Prefix length $k$},
      ylabel={PSNR (dB)},
      grid=major,
      grid style={opacity=0.25},
      xmin=0, xmax=256,
      ymin=12.1, ymax=24.4,
      legend style={at={(0.25,0.35)}, anchor=north west, font=\tiny, fill=white, fill opacity=0.9, draw=gray!45, line width=0.2pt},
    ]
      \addplot[color=RE_OAE, mark=square, mark size=2.0pt, solid, line width=1.2pt, mark options={solid, fill=RE_OAE}] coordinates {(25,17.2211) (51,18.8754) (76,19.9388) (102,20.7086) (128,21.2583) (153,21.6162) (179,21.9021) (204,22.1015) (230,22.2461) (256,22.3435)};
      \addplot[color=RE_OAE_TA, mark=square*, mark size=2.0pt, solid, line width=1.2pt, mark options={solid, fill=RE_OAE_TA}] coordinates {(25,15.0437) (51,16.5702) (76,17.5407) (102,18.1500) (128,18.5471) (153,18.8297) (179,19.0167) (204,19.1483) (230,19.2554) (256,19.3226)};
      \addlegendimage{color=RE_OAE, mark=square, mark size=1.0pt, solid, line width=0.7pt, mark options={solid, fill=RE_OAE}}
      \addlegendentry{RE-OAE}
      \addlegendimage{color=RE_OAE_TA, mark=square*, mark size=1.0pt, solid, line width=0.7pt, mark options={solid, fill=RE_OAE_TA}}
      \addlegendentry{RE-OAE-TA}
    \end{axis}
    \end{tikzpicture}%
     \end{minipage}\hfill
\begin{minipage}{0.48\linewidth}
    \centering
    \begin{tikzpicture}
    \begin{axis}[
      width=1\linewidth,
      height=1\linewidth,
      xlabel={Prefix length $k$},
      ylabel={Top-1 accuracy},
      grid=major,
      grid style={opacity=0.25},
      xmin=0, xmax=256,
      ymin=0.00, ymax=0.87,
      legend style={at={(0.07,0.97)}, anchor=north west, font=\tiny, fill=white, fill opacity=0.9, draw=gray!45, line width=0.2pt},
    ]
    \addplot[color=UB, dashed, line width=1.2pt] coordinates {(0,0.774700) (256,0.774700)};
      \addplot[color=RE_OAE, mark=square, mark size=2.0pt, solid, line width=1.2pt, mark options={solid, fill=RE_OAE}] coordinates {(25,0.0416) (51,0.0684) (76,0.0880) (102,0.1150) (128,0.1340) (153,0.1474) (179,0.1618) (204,0.1722) (230,0.1788) (256,0.1890)};
      \addplot[color=RE_OAE_TA, mark=square*, mark size=2.0pt, solid, line width=1.2pt, mark options={solid, fill=RE_OAE_TA}] coordinates {(25,0.3092) (51,0.4026) (76,0.4502) (102,0.4742) (128,0.4966) (153,0.5006) (179,0.5022) (204,0.5054) (230,0.5098) (256,0.5072)};
      \addlegendimage{color=UB, dashed, line width=1.2pt}
      \addlegendentry{UB}
      \addlegendimage{color=RE_OAE, mark=square, mark size=1.0pt, solid, line width=0.7pt, mark options={solid, fill=RE_OAE}}
      \addlegendentry{RE-OAE}
      \addlegendimage{color=RE_OAE_TA, mark=square*, mark size=1.0pt, solid, line width=0.7pt, mark options={solid, fill=RE_OAE_TA}}
      \addlegendentry{RE-OAE-TA}
    \end{axis}
    \end{tikzpicture}%
    \end{minipage}
  \caption{Reconstruction fidelity vs.\ downstream task performance for the reconstruct-then-classify pipelines \textbf{RE-OAE} and \textbf{RE-OAE-TA} ($K{=}256$).}
  \label{fig:recon_fidelity_vs_accuracy}
\end{figure}
\input{Figures/fig_recon_samples_labels}

Our reconstruction-based evaluation in Figure~\ref{fig:recon_fidelity_vs_accuracy} and~\ref{fig:recon_samples} shows that strong reconstruction performance does not translate into reliable downstream classification. Even at a substantially larger transmission budget ($K{=}256$ already offers better performance than $K{=}36$ in Figure~\ref{fig:perf_comp}), {\bf RE-OAE} attains satisfying \ac{PSNR} and visually plausible reconstructions, yet top-1 accuracy on decoded images remains far below the original-image bound ($18.9\%$ vs.\ $77.5\%$). Adding task-aware supervision in \textbf{RE-OAE-TA} substantially improves performance, but still falls well short of the upper bound. These results indicate that a modular ``compress first, infer later'' pipeline can fail even when reconstruction error is small. Goal-oriented design is therefore necessary to align the transmitted representation with task-relevant information.
\subsubsection{Performance Comparison}\label{sssec:OAE_vs_GO_OAE}
\begin{figure}[t]
  \centering
\begin{tikzpicture}
\begin{axis}[
  width=0.48\textwidth,
  height=0.32\textwidth,
  xlabel={Prefix length $k$},
  ylabel={Top-1 Accuracy},
  grid=major,
  grid style={opacity=0.25},
  xmin=3,
  ymin=0,
  ymax=0.8,
  xmax=36,
  legend style={at={(0.6,0.7)}, anchor=north west, font=\tiny, fill=white, fill opacity=0.9, draw=gray!45, line width=0.2pt},
]
\addplot[color=fig1E1, dashed, line width=1.2pt, forget plot] coordinates {(0,0.774700) (36,0.774700)};
  \addplot[color=fig1E2, mark=square, mark size=2.0pt, line width=1.2pt, forget plot] coordinates {(3,0.012700) (7,0.014500) (10,0.018900) (14,0.024400) (18,0.026900) (21,0.028400) (25,0.034000) (28,0.038700) (32,0.042000) (36,0.047100)};
  \addplot[color=fig1E3, mark=square*, mark size=2.0pt, solid, line width=1.2pt, mark options={solid, fill=fig1E3}, forget plot] coordinates {(3,0.017200) (7,0.066600) (10,0.170800) (14,0.308500) (18,0.408000) (21,0.446300) (25,0.486900) (28,0.513000) (32,0.538400) (36,0.553700)};
  \addplot[color=fig1E40, mark=diamond, mark size=2.0pt, densely dotted, line width=1.2pt, mark options={solid}, forget plot] coordinates {(3,0.012400) (7,0.016400) (10,0.015900) (14,0.026100) (18,0.051500) (21,0.063900) (25,0.106600) (28,0.151900) (32,0.208800) (36,0.262200)};
  \addplot[color=fig1E40+supcon, mark=diamond*, mark size=2.0pt, densely dotted, line width=1.2pt, mark options={solid, fill=fig1E40+supcon}, forget plot] coordinates {(3,0.037600) (7,0.100800) (10,0.176700) (14,0.314900) (18,0.452400) (21,0.499200) (25,0.554900) (28,0.573100) (32,0.598500) (36,0.637600)};
  \addplot[color=fig1E4a, mark=o, mark size=2.0pt, solid, line width=1.2pt, forget plot] coordinates {(3,0.049300) (7,0.233800) (10,0.382300) (14,0.507300) (18,0.577400) (21,0.621200) (25,0.647600) (28,0.677000) (32,0.697700) (36,0.707400)};
  \addplot[color=fig1E4a+supcon, mark=*, mark size=2.0pt, solid, line width=1.2pt, mark options={solid, fill=fig1E4a+supcon}, forget plot] coordinates {(3,0.131400) (7,0.479800) (10,0.598100) (14,0.665400) (18,0.693800) (21,0.706100) (25,0.719000) (28,0.725700) (32,0.730100) (36,0.731400)};
  \addplot[color=fig1E4b, mark=asterisk, mark size=2.8pt, solid, line width=1.5pt, mark options={solid, fill=fig1E4b}, forget plot] coordinates {(3,0.270000) (7,0.556300) (10,0.643300) (14,0.689200) (18,0.705500) (21,0.718200) (25,0.728600) (28,0.733100) (32,0.741300) (36,0.744300)};
  \addlegendimage{color=fig1E1, dashed, line width=0.7pt, forget plot}
  \addlegendentry{UB}
  \addlegendimage{color=fig1E2, mark=square, mark size=1.0pt, line width=0.7pt}
  \addlegendentry{RE-OAE}
  \addlegendimage{color=fig1E3, mark=square*, mark size=1.0pt, solid, line width=0.7pt, mark options={solid, fill=fig1E3}}
  \addlegendentry{RE-OAE-TA}
  \addlegendimage{color=fig1E40, mark=diamond, mark options={solid}, mark size=1.0pt, densely dotted, line width=0.7pt}
  \addlegendentry{GO-VQAE}
  \addlegendimage{color=fig1E40+supcon, mark=diamond*, mark size=1.0pt, densely dotted, line width=0.7pt, mark options={solid, fill=fig1E40+supcon}}
  \addlegendentry{GO-VQAE-SC}
  \addlegendimage{color=fig1E4a, mark=o, mark size=1.0pt, solid, line width=0.7pt, mark options={solid, fill=fig1E4a}}
  \addlegendentry{GO-OAE-U}
  \addlegendimage{color=fig1E4a+supcon, mark=*, mark size=1.0pt, solid, line width=0.7pt, mark options={solid, fill=fig1E4a+supcon}}
  \addlegendentry{GO-OAE-U-SC}
  \addlegendimage{color=fig1E4b, mark=asterisk, mark size=1.0pt, solid, line width=0.7pt, mark options={solid, fill=fig1E4b}}
  \addlegendentry{GO-OAE-I-SC}
\end{axis}
\end{tikzpicture}
\vspace{-2ex}
  \caption{Task performance comparison (\ac{CNN} backbone, $K=36$).}
  \label{fig:perf_comp}
\end{figure}
Figure~\ref{fig:perf_comp} reports mean top-1 accuracy (three runs per scenario) as a function of selected prefix length ($k{\le}K{=}36$) for all eight scenarios applied to \ac{CNN} backbone. Our main observations are as follows.
First, \textbf{RE-OAE} and \textbf{RE-OAE-TA} remain weak across all prefix lengths: at the full budget they reach only $4.7\%$ and $55.4\%$, respectively, versus $77.5\%$ for the original-image bound (\textbf{UB}), confirming that a modular ``compress first, infer later'' pipeline can fail despite low reconstruction error. Task performance is also much lower than the results with higher prefix budget $K{=}256$ (Figure~\ref{fig:recon_fidelity_vs_accuracy} and \ref{fig:recon_samples}).
Second, end-to-end goal-oriented training improves accuracy substantially. \textbf{GO-VQAE} and \textbf{GO-VQAE-SC} jointly optimize the encoder, codebook, and task head with maximum prefix length (complete sequence, no truncation) without ordering. They serve as unordered references and show that direct task optimization outperforms reconstruction-first design. However, without ordered training, the budget $K{=}36$ is too tight for them to obtain sufficient task-relevant information. 
Third, semantic preservation in quantized latent space yields large gains in both unordered and ordered settings: \textbf{GO-VQAE-SC} improves over \textbf{GO-VQAE} by $+37.5$ percentage points at $k=36$ ($63.8\%$ vs.\ $26.2\%$), and \textbf{GO-OAE-U-SC} improves over \textbf{GO-OAE-U} by $+2.4$ points ($73.1\%$ vs.\ $70.7\%$); the effect is even larger at low budgets.
%
Fourth, ordered training (\textbf{GO-OAE-U}, \textbf{GO-OAE-U-SC}, \textbf{GO-OAE-I-SC}) strongly boosts accuracy at small prefixes, enabling more robust performance under tight communication budgets.
Finally, \textbf{GO-OAE-I-SC} approaches \textbf{UB} with limited prefix budget: at $k{=}36$ it reaches accuracy $74.4\%$ ($96.1\%$ of \textbf{UB}) using only 324 transmitted bits ($683\times$ compression vs.\ uncompressed RGB). Also, with only $k{=}25$ code indices (225 bits) it already attains $72.9\%$ accuracy ($94.1\%$ of \textbf{UB}, $980\times$ compression), highlighting the efficiency of goal-oriented ordered representations under communication constraints.

\subsubsection{Semantic Preservation in Codebook Embeddings - Effect of Supervision}\label{sssec:semantic_preservation}
Figure~\ref{fig:perf_comp} already showed that semantic preservation in quantized latent space, enforced by a supervised contrastive regularizer, yields large gains in both unordered and ordered settings. Figure~\ref{fig:semantic_preserve} (the same \ac{CNN} backbone) complements this result in two ways: (i)~\textbf{GO-OAE-I-SC} maintains stronger semantic structure in the quantized latent space than \textbf{GO-VQAE}, and (ii)~our \ac{EMA} codebook update preserves semantic information through quantization (pre-VQ vs. post-VQ).



\begin{figure}[t]
  \centering
\begin{tikzpicture}
  \node[inner sep=0] (plots) {\includegraphics[width=0.95\linewidth]{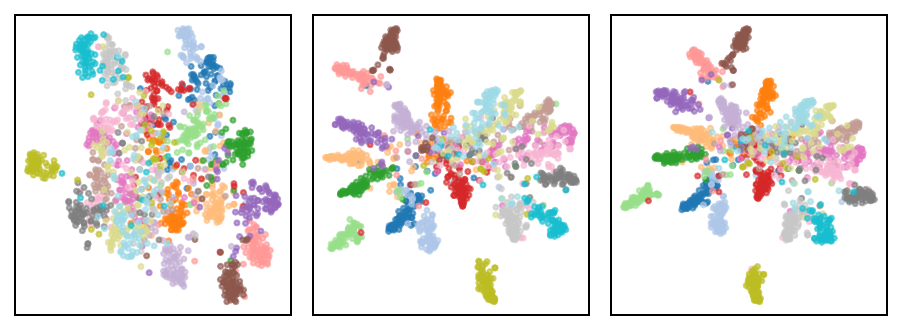}};
\end{tikzpicture}

\vspace{0.25em}
\noindent
  \begin{minipage}{0.3\linewidth}\centering
    {\small GO-VQAE}\\
    {\small (pre-VQ)}
  \end{minipage}
  \begin{minipage}{0.3\linewidth}\centering
    {\small GO-OAE-I-SC}\\
    {\small (pre-VQ)}
  \end{minipage}
  \begin{minipage}{0.3\linewidth}\centering
    {\small GO-OAE-I-SC}\\
    {\small (post-VQ)}
  \end{minipage}
    \caption{T-SNE of latent components colored by $20$ classes.}
  \label{fig:semantic_preserve}
\end{figure}

\subsubsection{Ordered Latent Space - Effect of Sampling Strategies}   \label{sssec:sampling_strategies}
\begin{figure}[t]
  \centering
\begin{tikzpicture}
\begin{axis}[
  width=0.48\textwidth,
  height=0.25\textwidth,
  xlabel={Prefix length $k$},
  ylabel={Information gain (nats)},
  ylabel style={font=\small},
  grid=major,
  grid style={opacity=0.25},
  xmin=0,
  xmax=36,
  legend style={
    at={(axis cs:23,1)},
    anchor=south west,
    font=\tiny,
    fill=white,
    fill opacity=0.9,
    draw=gray!45,
    line width=0.2pt,
    inner xsep=1pt,
    inner ysep=0pt,
    legend cell align=left,
  },
]
  \addplot[color=fig1E4a, mark=o, mark size=2.0pt, solid, line width=1.2pt, mark options={solid, fill=fig1E4a}] coordinates {(3,1.246721) (7,1.351610) (10,0.614524) (14,0.482953) (18,0.291934) (21,0.162742) (25,0.110243) (28,0.092749) (32,0.061399) (36,0.046276)};
  \addplot[color=fig1E4a+supcon, mark=*, mark size=2.0pt, solid, line width=1.2pt, mark options={solid, fill=fig1E4a+supcon}] coordinates {(3,1.335080) (7,1.363568) (10,0.457572) (14,0.259932) (18,0.113599) (21,0.049092) (25,0.040387) (28,0.019399) (32,0.010690) (36,0.011505)};
  \addplot[color=fig1E4b, mark=asterisk, mark size=2.8pt, solid, line width=1.5pt, mark options={solid, fill=fig1E4b}] coordinates {(3,1.941881) (7,1.313617) (10,0.330486) (14,0.158861) (18,0.069635) (21,0.024782) (25,0.025868) (28,0.011468) (32,0.009341) (36,0.007984)};
  \addplot[gray, dashed, line width=0.6pt] coordinates {(0,0) (47,0)};
  \addlegendimage{color=fig1E4a, mark=o, mark size=1.0pt, solid, line width=0.7pt, mark options={solid, fill=fig1E4a}}
  \addlegendentry{GO-OAE-U}
  \addlegendimage{color=fig1E4a+supcon, mark=*, mark size=1.0pt, solid, line width=0.7pt, mark options={solid, fill=fig1E4a+supcon}}
  \addlegendentry{GO-OAE-U-SC}
  \addlegendimage{color=fig1E4b, mark=asterisk, mark size=1.0pt, solid, line width=0.7pt, mark options={solid, fill=fig1E4b}}
  \addlegendentry{GO-OAE-I-SC}
\end{axis}
\end{tikzpicture}
\vspace{-2ex}
  \caption{Information gain comparison (\ac{CNN} backbone, $K=36$).}
  \label{fig:delta_CE}
\end{figure}
\begin{figure}[!ht]
  \centering
\begin{tikzpicture}
\begin{axis}[
  width=0.48\textwidth,
  height=0.25\textwidth,
  xlabel={Prefix length $k$},
  ylabel={Top-1 Accuracy},
  ylabel style={font=\small},
  grid=major,
  grid style={opacity=0.25},
  xmin=0,
  xmax=36,
  legend style={
    at={(axis cs:20,0.3)},
    anchor=south west,
    font=\tiny,
    fill=white,
    fill opacity=0.9,
    draw=gray!45,
    line width=0.2pt,
    inner xsep=1pt,
    inner ysep=0pt,
    legend cell align=left,
  },
]
\addplot[color=fig1E4b, dashed, line width=1.2pt, forget plot] coordinates {(0,0.774700) (36,0.774700)};
\addplot[color=ViT, dotted, line width=1.2pt, forget plot] coordinates {(0,0.843700) (36,0.843700)};
\addplot[color=fig1E4b, mark=asterisk, mark size=2.8pt, solid, line width=1.2pt, mark options={solid, fill=fig1E4b}, forget plot] coordinates {(3,0.270000) (7,0.556300) (10,0.643300) (14,0.689200) (18,0.705500) (21,0.718200) (25,0.728600) (28,0.733100) (32,0.741300) (36,0.744300)};
\addplot[color=ViT, mark=triangle*, mark size=2.8pt, solid, line width=1.2pt, mark options={solid, fill=ViT}, forget plot] coordinates {(3,0.752700) (7,0.784900) (10,0.795500) (14,0.796000) (18,0.796300) (21,0.796300) (25,0.797100) (28,0.797300) (32,0.797200) (36,0.797200)};
\addplot[color=ViT, mark=triangle, mark size=2.8pt, dashed, line width=1.2pt, mark options={solid, fill=ViT}, forget plot] coordinates {(3,0.634100) (7,0.694000) (10,0.718800) (14,0.725400) (18,0.728700) (21,0.731000) (25,0.734800) (28,0.734900) (32,0.735000) (36,0.735000)};

  \addlegendimage{color=fig1E4b, dashed, line width=0.7pt, forget plot}
  \addlegendentry{UB (CNN)}
    \addlegendimage{color=ViT, dotted, line width=0.7pt, forget plot}
  \addlegendentry{UB (ViT-S)}
  \addlegendimage{color=fig1E4b, mark=asterisk, mark size=1.0pt, solid, line width=0.7pt, mark options={solid, fill=fig1E4b}}
  \addlegendentry{GO-OAE-I-SC (CNN)}
    \addlegendimage{color=ViT, mark=triangle*, mark size=1.0pt, solid, line width=0.7pt, mark options={solid, fill=ViT}}
  \addlegendentry{GO-OAE-I-SC (ViT-S, E=4)}
      \addlegendimage{color=ViT, mark=triangle, mark size=1.0pt, dashed, line width=0.7pt, mark options={solid, fill=ViT}}
  \addlegendentry{GO-OAE-I-SC (ViT-S, E=1)}
\end{axis}
\end{tikzpicture}
\vspace{-2ex}
  \caption{Backbone comparison.}
  \label{fig:compare_cnn_vit}
\end{figure}
Figure \ref{fig:delta_CE} compares the block information gain of  \textbf{GO-OAE-U}, \textbf{GO-OAE-U-SC} and \textbf{GO-OAE-I-SC}. We partition the ordered codes into $10$ blocks, where block $p$ ends at prefix length $k_p$ corresponding to $p\in\{10\%, 20\%,\ldots, 100\%\}$ of the full budget. It reports the blockwise empirical information gain $\Delta\widehat{I}_p = \mathrm{CE}(\widehat{\vy}{\mid}\vz_{\le k_{p-1}}) - \mathrm{CE}(\widehat{\vy}{\mid}\vz_{\le k_p})$.
The curve of \textbf{GO-OAE-I-SC} decays sharply across prefix length, confirming the ordered autoencoding requirement of monotonically diminishing conditional mutual information, consistent with the ordering predicted by Corollary \ref{corollary:GO_OAE_ordering}.

\subsection{Backbone Model Discussion}\label{ssec:backbone}
Our experiments in Figure \ref{fig:compare_cnn_vit} show that \ac{GO-OAE} interacts differently with \ac{CNN} and \ac{ViT} backbones. CNN's feature specialization supports complementary latent information and clear progressive refinement. In contrast, standard \acp{ViT} are optimized for early global contextualization, which conflicts with ordered representation learning: task-relevant information becomes rapidly mixed and redundantly distributed across tokens, allowing the first few latent components to capture most of the utility and yielding nearly flat prefix-performance curves. Using fewer transformer blocks before quantization ($E{=}1$ instead of $E{=}4$) spreads task utility across more latent components, yielding progressive gains over longer prefix. However, the reduced depth at transmitter also limits task-relevant information extraction before the bottleneck, resulting in lower converged accuracy. Achieving progressive refinement with transformer backbones may therefore require architectural redesigns that preserve decomposable intermediate features and defer global integration until after the communication bottleneck. This represents an important direction for future work.

\section{Conclusions}\label{sec:conclusions}
We introduced \ac{GO-OAE}, a goal-oriented progressive communication framework robust to unpredictable bandwidth that learns ordered quantized representations, enabling flexible task inference rather than source reconstruction. We showed that the goal-oriented distortion objective admits a weighted conditional information bottleneck interpretation, and introduced \ac{MPIS} to cut training cost from $\mathcal{O}(BK)$ to $\mathcal{O}(Bm)$, with $m{\ll}K$, and with variance scaling as $1/m$. On MiniImageNet, our approach attains ${\geq}96\%$ of the lossless bound at 324 bits/image ($683\times$ compression) and ${\geq}94\%$ with only $70\%$ of the quantized latent components, outperforming reconstruction-based codecs and \acp{OAE} trained with uniform prefix sampling at all budgets. Semantic preservation regularization and ordered importance sampling are essential when the communication budget is tight. Future work will extend \ac{GO-OAE} to drone surveillance and robotic vision under realistic communication, compute, and energy constraints, while developing transformer architectures better suited to progressive refinement.
\appendix
\subsection{Proof of Theorem~\ref{theorem:GO_OAE_objective}}\label{ssec:proof_GO_OAE_objective}
\begin{proof}
    Assuming Bayes-optimal decoders, for which $\E{-\log p(\vy|\vz_{\leq i})}{=}H(\vy|\vz_{\leq i})$, where $H(\cdot|\cdot)$ denotes the conditional entropy. Then, we have
     \vspace{-1ex}
    \begin{equation} 
    \Ls_D = \sum_{i=1}^K H(\vy|\vz_{\leq i}) =K H(\vy) - \sum_{i=1}^K I(\vy; \vz_{\leq i})\nonumber
    \end{equation}
    Thus, minimizing $\Ls_D$ is equivalent to maximizing $\sum_{i=1}^K I(\vy; \vz_{\leq i})$. Based on the chain rule, we have 
    \vspace{-1ex}
    $$I(\vy; \vz_{\leq i}) = \sum_{j=1}^i I(\vy; \vz_j|\vz_{< j}),    \vspace{-1ex}$$
    which leads to
    \begin{align}
        \sum_{i=1}^K I(\vy; \vz_{\leq i}) & = \sum_{i=1}^K \sum_{j=1}^i I(\vy; \vz_j|\vz_{< j})\nonumber\\
        & = \sum_{j=1}^K (K-j+1) I(\vy; \vz_j|\vz_{< j})\nonumber.    
    \end{align}
    \vspace{-1ex}
    The loss becomes
    \begin{equation}
    \vspace{-1ex}
        \Ls_D = K H(\vy) - \sum_{i=1}^K w_i I(\vy; \vz_i|\vz_{< i}),\nonumber
    \end{equation}
    where the weights are defined as $w_i = K-i+1$, which means information contributed by earlier latent components is rewarded more strongly.
\end{proof}

\subsection{Proof of Corollary~\ref{corollary:GO_OAE_ordering}}\label{ssec:proof_GO_OAE_ordering}
\begin{proof}
    From Theorem \ref{theorem:GO_OAE_objective}, we know minimizing \eqref{eqn:goal_oriented_distortion} is equivalent, up to an additive constant, to maximizing \eqref{eqn:GO_OAE_IB} with non-increasing weights $w_i{=}K-i+1$, and therefore $w_1{>}w_2{>}\ldots{>}w_K{>}0$. Define the incremental task-relevant information gain as $\Delta_i = I(\vy; \vz_i|\vz_{< i})$ (for brevity we abuse the notation and still use $\Delta_i$ for goal-oriented information gain). Suppose, for contradiction, that an optimal representation contains an adjacent inversion, i.e., $\Delta_i{<}\Delta_{i+1}$ for some $i$. After swapping their order, the contribution becomes $w_i \Delta_{i+1}{+}w_{i+1} \Delta_i$, which is strictly larger than the original contribution $w_i \Delta_i + w_{i+1} \Delta_{i+1}$. Thus, swapping the two latent components would strictly increase the objective. Since the model class is closed under permutations and the information contributions are invariant under permutation, the swapped representation remains admissible. This contradicts the global optimality of the original representation. Therefore, no adjacent inversion can exist at optimum, and hence $\Delta_1{\geq}\Delta_2 {\geq}\ldots{\geq}\Delta_K$.
\end{proof}

\subsection{Proof of Corollary~\ref{corollary:unbiased_estimator}}\label{ssec:proof_unbiased_estimator}
\begin{proof}
In below, we prove the unbiasedness, variance, and unbiased gradient estimation properties of the proposed estimator $\widehat{\Ls}_D$, respectively.

{\bf Unbiasedness:} By linearity of expectation, we have
$$\E{\widehat{\Ls}_D} = \E{\frac{1}{m}\sum_{n=1}^m\EE_{i_n\sim q}\left[\frac{L_{i_n}}{q(i_n)}\right]}.$$
Since $i_n$ is sampled from $q(i)$, we have 
\begin{equation}
\EE_{i_n\sim q}\left[\frac{L_{i_n}}{q(i_n)}\right] = \sum_{i=1}^K q(i) \frac{L_i}{q(i)} = \sum_{i=1}^K L_i = \Ls_D.
\label{eqn:Ls_D}
\end{equation}
Thus, $\E{\widehat{\Ls}_D} = \Ls_D$, which proves the unbiasedness. 

{\bf Variance:} Let us define $X_n:=L_{i_n}/q(i_n)$ for $n=1, \ldots, m$. Since $i_n$ are sampled independently, and all $X_1, \ldots, X_m$ are i.i.d. and have identical variance, we have
$$\Var\left[\widehat{\Ls}_D\right] = \frac{1}{m^2}\sum_{n=1}^m\Var[X_n]=\frac{1}{m}\Var\left[\frac{L_i}{q(i)}\right].$$
Given the variance definition, we have
\begin{align}  
\Var\left[\frac{L_i}{q(i)}\right] & = \EE_{i\sim q}\left[\left(\frac{L_i}{q(i)}\right)^2\right] - \left(\EE_{i\sim q}\left[\frac{L_i}{q(i)}\right]\right)^2 \nonumber\\
& \overset{\eqref{eqn:Ls_D}}{=} \sum_{i=1}^K q(i) \frac{L_i^2}{q(i)^2}- \Ls_D^2 = \sum_{i=1}^K \frac{L_i^2}{q(i)} - \Ls_D^2.
\end{align}
Thus, we have $\Var\left[\widehat{\Ls}_D\right] = (1/m) \left(\sum_{i=1}^K L_i^2/q(i) -  \Ls_D^2\right)$.

\vspace{1ex}
{\bf Unbiased Gradient Estimation:} With the proven unbiasedness, and under standard regularity conditions allowing interchange of gradient and expectation, we have
$$\nabla_{\theta, \phi}\Ls_D(\theta, \phi)=\nabla_{\theta, \phi}\EE[\widehat{\Ls}_D(\theta, \phi)] = \EE[\nabla_{\theta, \phi}\widehat{\Ls}_D(\theta, \phi)].$$         
\end{proof}
\bibliographystyle{IEEEtran}
\bibliography{references}

\begin{thebibliography}{10}
\providecommand{\url}[1]{#1}
\csname url@samestyle\endcsname
\providecommand{\newblock}{\relax}
\providecommand{\bibinfo}[2]{#2}
\providecommand{\BIBentrySTDinterwordspacing}{\spaceskip=0pt\relax}
\providecommand{\BIBentryALTinterwordstretchfactor}{4}
\providecommand{\BIBentryALTinterwordspacing}{\spaceskip=\fontdimen2\font plus
\BIBentryALTinterwordstretchfactor\fontdimen3\font minus \fontdimen4\font\relax}
\providecommand{\BIBforeignlanguage}[2]{{%
\expandafter\ifx\csname l@#1\endcsname\relax
\typeout{** WARNING: IEEEtran.bst: No hyphenation pattern has been}%
\typeout{** loaded for the language `#1'. Using the pattern for}%
\typeout{** the default language instead.}%
\else
\language=\csname l@#1\endcsname
\fi
#2}}
\providecommand{\BIBdecl}{\relax}
\BIBdecl

\bibitem{goldreich2012theory}
O.~Goldreich, B.~Juba, and M.~Sudan, ``A theory of goal-oriented communication,'' \emph{Journal of the ACM (JACM)}, vol.~59, no.~2, pp. 1--65, 2012.

\bibitem{mentzer2020high}
F.~Mentzer, G.~D. Toderici, M.~Tschannen, and E.~Agustsson, ``High-fidelity generative image compression,'' \emph{Advances in neural information processing systems (NeurIPS)}, vol.~33, pp. 11\,913--11\,924, 2020.

\bibitem{zhang2025progressive}
G.~Zhang and et~al., ``Progressive learned image transmission for semantic communication using hierarchical {VAE},'' \emph{IEEE Transactions on Cognitive Communications and Networking}, 2025.

\bibitem{diederik2019introduction}
D.~P. Kingma and M.~Welling, ``An introduction to variational autoencoders,'' \emph{Foundations and Trends{\textregistered} in Machine Learning}, vol.~12, no.~4, pp. 307--392, 2019.

\bibitem{equitz1991successive}
W.~H. Equitz and T.~M. Cover, ``Successive refinement of information,'' \emph{IEEE Transactions on information theory}, vol.~37, no.~2, pp. 269--275, 1991.

\bibitem{xu2021anytime}
Y.~Xu, Y.~Song, S.~Garg, L.~Gong, R.~Shu, A.~Grover, and S.~Ermon, ``Anytime sampling for autoregressive models via ordered autoencoding,'' in \emph{International Conference on Learning Representations (ICLR)}, 2021.

\bibitem{liang2025semantic}
H.~Liang, C.~Dong, W.~An, Z.~Bao, X.~Xu, and R.~Meng, ``Semantic importance-aware communication over {MIMO} fading channels,'' \emph{IEEE Internet of Things Journal}, 2025.

\bibitem{ravi2017optimization}
S.~Ravi and H.~Larochelle, ``Optimization as a model for few-shot learning,'' in \emph{International Conference on Learning Representations (ICLR)}, 2017.

\bibitem{russakovsky2015imagenet}
O.~Russakovsky, J.~Deng, H.~Su, J.~Krause, S.~Satheesh, S.~Ma, Z.~Huang, A.~Karpathy, A.~Khosla, M.~Bernstein \emph{et~al.}, ``Imagenet large scale visual recognition challenge,'' \emph{International Journal of Computer Vision (IJCV)}, vol. 115, no.~3, pp. 211--252, 2015.

\bibitem{qin2021semantic}
Z.~Qin, X.~Tao, J.~Lu, W.~Tong, and G.~Y. Li, ``Semantic communications: {P}rinciples and challenges,'' \emph{arXiv preprint arXiv:2201.01389}, 2021.

\bibitem{gunduzsemantic2023}
D.~Gündüz, Z.~Qin, I.~E. Aguerri, H.~S. Dhillon, Z.~Yang, A.~Yener, K.~K. Wong, and C.-B. Chae, ``Beyond transmitting bits: Context, semantics, and task-oriented communications,'' \emph{IEEE Journal on Selected Areas in Communications}, vol.~41, no.~1, pp. 5--41, 2023.

\bibitem{getusemantic2024}
T.~M. Getu, G.~Kaddoum, and M.~Bennis, ``A survey on goal-oriented semantic communication: Techniques, challenges, and future directions,'' \emph{IEEE Access}, vol.~12, pp. 51\,223--51\,274, 2024.

\bibitem{luo_2022}
X.~Luo, H.-H. Chen, and Q.~Guo, ``Semantic communications: Overview, open issues, and future research directions,'' \emph{IEEE Wireless Communications}, vol.~29, no.~1, pp. 210--219, 2022.

\bibitem{liao2024adasem}
Q.~Liao and T.-Y. Tung, ``Adasem: Adaptive goal-oriented semantic communications for end-to-end camera relocalization,'' in \emph{IEEE INFOCOM 2024-IEEE Conference on Computer Communications}.\hskip 1em plus 0.5em minus 0.4em\relax IEEE, 2024, pp. 1111--1120.

\bibitem{li2026goal}
N.~Li, Y.~Deng, and D.~Niyato, ``Goal-oriented semantic communication for wireless video transmission via generative ai,'' \emph{IEEE Transactions on Wireless Communications}, 2026.

\bibitem{wanasekara2025sc}
S.~H. Wanasekara, V.-D. Nguyen, K.-S. Wong, M.-D. Nguyen, S.~Chatzinotas, and O.~A. Dobre, ``Sc-gir: Goal-oriented semantic communication via invariant representation learning for image transmission,'' \emph{IEEE Transactions on Mobile Computing}, 2025.

\bibitem{lyu2024semantic}
Z.~Lyu, G.~Zhu, J.~Xu, B.~Ai, and S.~Cui, ``Semantic communications for image recovery and classification via deep joint source and channel coding,'' \emph{IEEE Transactions on Wireless Communications}, vol.~23, no.~8, pp. 8388--8404, 2024.

\bibitem{li2024autoregressive}
T.~Li, Y.~Tian, H.~Li, M.~Deng, and K.~He, ``Autoregressive image generation without vector quantization,'' \emph{Advances in Neural Information Processing Systems (NeurIPS)}, vol.~37, pp. 56\,424--56\,445, 2024.

\bibitem{kusupati2022matryoshka}
A.~Kusupati and et~al., ``Matryoshka representation learning,'' \emph{Advances in Neural Information Processing Systems (NeurIPS)}, vol.~35, pp. 30\,233--30\,249, 2022.

\bibitem{van2017neural}
A.~Van Den~Oord, O.~Vinyals \emph{et~al.}, ``Neural discrete representation learning,'' \emph{Advances in neural information processing systems (NeurIPS)}, vol.~30, 2017.

\bibitem{grattafiori2024llama}
A.~Grattafiori, A.~Dubey, A.~Jauhri, A.~Pandey, A.~Kadian, A.~Al-Dahle, A.~Letman, A.~Mathur, A.~Schelten, A.~Vaughan \emph{et~al.}, ``The {L}lama 3 herd of models,'' \emph{arXiv preprint arXiv:2407.21783}, 2024.

\bibitem{lancucki2020robust}
A.~{\L}a{\'n}cucki, J.~Chorowski, G.~Sanchez, R.~Marxer, N.~Chen, H.~J. Dolfing, S.~Khurana, T.~Alum{\"a}e, and A.~Laurent, ``Robust training of vector quantized bottleneck models,'' in \emph{2020 International Joint Conference on Neural Networks (IJCNN)}.\hskip 1em plus 0.5em minus 0.4em\relax IEEE, 2020, pp. 1--7.

\bibitem{khosla2020supervised}
P.~Khosla, P.~Teterwak, C.~Wang, A.~Sarna, Y.~Tian, P.~Isola, A.~Maschinot, C.~Liu, and D.~Krishnan, ``Supervised contrastive learning,'' \emph{Advances in neural information processing systems (NeurIPS)}, vol.~33, pp. 18\,661--18\,673, 2020.

\bibitem{timm_mini_imagenet_hf}
R.~Wightman, ``timm/mini-imagenet dataset,'' \url{https://huggingface.co/datasets/timm/mini-imagenet}, 2021, accessed: 2026-07-30.

\end{thebibliography}


\end{document}